\documentclass[11pt,a4paper,reqno]{amsart}
\usepackage[T1]{fontenc}
\usepackage[utf8]{inputenc}
\usepackage{lmodern}
\usepackage{amsmath,amssymb,amsthm,mathtools}
\usepackage[margin=26mm,headheight=14pt,footskip=18pt]{geometry}
\usepackage{microtype}
\usepackage{enumitem}
\usepackage{booktabs,tabularx,array}
\usepackage{xcolor}
\usepackage{fancyhdr}
\usepackage[unicode,pdfencoding=auto,psdextra]{hyperref}
\hypersetup{colorlinks=true,linkcolor=blue!45!black,citecolor=blue!45!black,
 urlcolor=blue!45!black,pdftitle={A generalized quantum Stein lemma for convex families on von Neumann algebras},
 pdfauthor={Li Gao},
 pdfsubject={Composite hypothesis testing and relative entropy on von Neumann algebras}}
\hypersetup{bookmarksdepth=2}
\setlist[enumerate]{topsep=4pt,itemsep=3pt,parsep=0pt}
\allowdisplaybreaks[2]
\numberwithin{equation}{section}
\newtheorem{theorem}{Theorem}[section]
\newtheorem{proposition}[theorem]{Proposition}
\newtheorem{lemma}[theorem]{Lemma}
\newtheorem{corollary}[theorem]{Corollary}
\theoremstyle{definition}

\newtheorem{assumption}[theorem]{Assumption}
\newtheorem{example}[theorem]{Example}
\theoremstyle{remark}
\newtheorem{remark}[theorem]{Remark}
\newcommand{\M}{\mathcal M}
\newcommand{\N}{\mathcal N}
\newcommand{\F}{\mathcal F}
\newcommand{\Sn}{\mathcal S}
\newcommand{\one}{\mathbf 1}

\newcommand{\dd}{\,\mathrm d}
\newcommand{\Tr}{\operatorname{Tr}}
\newcommand{\supp}{\operatorname{s}}
\newcommand{\co}{\operatorname{conv}}
\newcommand{\hock}{\mathsf H}
\newcommand{\Q}{\mathsf Q}
\newcommand{\Dht}{D_{\mathrm H}}
\newcommand{\norm}[1]{\left\lVert #1\right\rVert}

\newcommand{\rhon}{\rho^{\otimes n}}

\newcommand{\eref}[1]{\eqref{#1}}
\newcommand{\arxiv}[1]{\href{https://arxiv.org/abs/#1}{arXiv:#1}}
\newcommand{\doi}[1]{\href{https://doi.org/#1}{doi:#1}}

\title[GQSL for von Neumann algebras]{A Generalized quantum Stein lemma
on von Neumann algebras}
\author{Li Gao}\address{School of Mathematics and Statistics, Wuhan University, Wuhan 430072, China}
\address{Wuhan Institute of Quantum Technology, Wuhan 430075, China}
\email{gao.li@whu.edu.cn}
\date{\today}

\begin{document}

\begin{abstract}
We prove a generalized quantum Stein lemma for i.i.d. normal states against convex, tensor-stable families on arbitrary von Neumann algebras. Assuming the existence of an alternative state with finite relative entropy from the null state, we show that, at every type-I error tolerance $\varepsilon\in (0,1)$, the worst case type-II error exponent is achieved with the regularized relative entropy with a strong converse. The proof combines the integral representation of relative entropy by hockey-stick divergences, a modular testing bound, and convex minimax argument. 
\end{abstract}
\maketitle

{\bf \noindent Statement of AI usage.}
The solution and proof strategy presented in this paper were suggested by ChatGPT 6 Astra. The mathematical arguments were independently verified by the author, who takes full responsibility for the content of the manuscript. AI assistance was also used for the literature review and for writing, proofreading, and revising the manuscript.

\begingroup
\footnotesize
\setlength{\parskip}{0pt}
\tableofcontents
\endgroup
\newpage

\section{Introduction}\label{sec:introduction}
The quantum Stein lemma identifies the optimal asymptotic type-II error exponent for two i.i.d. quantum states $\rho,\sigma $ with their relative entropy \cite{HP,ON}. Namely, for every $\varepsilon\in (0,1)$
\[ \lim_{n\to \infty }\frac{1}{n}D_{H}^\varepsilon(\rho^{\otimes n}||\sigma^{\otimes n})=D(\rho||\sigma )\]
where $D_{H}^\varepsilon$ is the hypothesis testing relative entropy.
Brand\~ao and Plenio in \cite{BP} proposed its generalization to composite alternative hypotheses of convex families, called Generalized Quantum Stein Lemma (GQSL). It asserts that under certain assumptions on the convex composite hypotheses, the asymptotic worst case type-II error exponent is given by the regularized relative entropy with respect to alternative hypotheses. A gap in the original achievability proof was identified in \cite{Gap}. The theorem was subsequently established by Hayashi and Yamasaki \cite{HY} for finite-dimensional quantum systems and independently by Lami \cite{Lami}. In particular, Hayashi and Yamasaki removed the original assumptions of closure under partial trace and permutation invariance.

Here, we extend the Generalized Quantum Stein Lemma to arbitrary von Neumann algebras, under the assumption of the existence of a one-copy alternative state $\sigma$ with finite relative entropy $D(\rho||\sigma)<\infty$ from the null state $\rho$. The convex family $\mathcal{F}=(\mathcal{F}_n)_n$ are assumed to be convex and stable under tensor products, but not necessarily be norm-closed, permutation invariant or closed under
restrictions. The conclusion includes the error exponent and a strong converse.

\subsection{Setting and main result}\label{sec:theorem}
Let $\M$ be a von Neumann algebra, let $\Sn(\M)$ denote its normal state space, and fix a state $\rho\in\Sn(\M)$. Write $\M_n=\M^{\otimes n}$, with the tensor factors in their given order. Let $\mathcal{F}=(\F_n)_{n\ge 1}$ be a family of state sets $\F_n\subseteq\Sn(\M_n)$. A test  $T\in\M_n$ is an operator $0\le T\le\one$ that accepts the null state $\rhon$ with probability $\rhon(T)$. The type-I and worst-case type-II errors are
\begin{align}
 \alpha_n(T):=1-\rhon(T),\qquad
 B_n(T):=\sup_{\sigma\in\F_n}\sigma(T).\label{eq:worst-beta}
\end{align}
For $0<\varepsilon<1$, we define the hypothesis testing relative entropy
\begin{align}
 \beta_{n,\varepsilon}
 &:=\inf_{\substack{0\le T\le\one\\\rhon(T)\ge1-\varepsilon}}B_n(T),\qquad
 \Dht^\varepsilon(\rhon\|\F_n):=-\log\beta_{n,\varepsilon}.
 \label{eq:beta}
\end{align}
All logarithms are natural with convention $-\log0=+\infty$. We will make the following assumption on the family $\mathcal{F}=(\F_n)_{n\ge 1}$.
\begin{assumption}\label{ass:main}
The family $(\F_n)_{n\ge1}$ has the following properties.
\begin{enumerate}[label=\textup{(A\arabic*)},leftmargin=3.5em]
 \item\label{ass:convex} Each $\F_n$ is nonempty and convex.
 \item\label{ass:tensor} It is stable under ordered tensor products:
 \begin{equation}
 \sigma\in\F_n,\quad\omega\in\F_m
 \quad\Longrightarrow\quad\sigma\otimes\omega\in\F_{n+m}.
 \label{eq:tensor-stability}
 \end{equation}
 \item\label{ass:reference} There is a state $\tau\in\F_1$ with
 \begin{equation}
 C:=D(\rho\|\tau)<\infty.
 \label{eq:finite-reference}
 \end{equation}
\end{enumerate}
\end{assumption}
Note that the finite relative entropy condition (A3) is relative to the null state $\rho$. It is not a uniform condition over all normal states. We do not assume the faithfulness of $\rho$ or $\tau$, or $\sigma$-finiteness of $\M$. Also, we do \emph{not} assume norm-closedness, closure under restrictions to tensor factors, permutation invariance of the family $\mathcal{F}=(\F_n)_{n\ge 1}$.

In the following, $D$ denotes Araki relative entropy (see Subsection~\ref{subsec:entropy-definition}). 
For normal states $\omega,\sigma$ and $t>0$, recall the hockey-stick divergences
\begin{equation}
 \hock_t(\omega\|\sigma):=(\omega-t\sigma)_+(\one),
 \label{eq:hock-intro}
\end{equation}
where the positive part is from the Jordan decomposition in the predual $\M_*$. For $r\in\mathbb R$ and $n \ge 1$, define the composite hockey-stick quantity
\begin{equation}
 \Q_n(r):=\inf_{\sigma\in\F_n}
           \hock_{e^{nr}}(\rhon\|\sigma).
 \label{eq:Qn}
\end{equation}

\begin{theorem}[Generalized quantum Stein lemma]\label{thm:main}
Under Assumption~\ref{ass:main}, the limit
\begin{equation}
 R:=\lim_{n\to\infty}\frac{1}{n}\inf_{\sigma\in\F_n}D(\rhon\|\sigma)
   =\inf_{n\ge1}\frac{1}{n}\inf_{\sigma\in\F_n}D(\rhon\|\sigma)
 \quad\text{exists and $R\in [0,C]$} 
 \label{eq:rate}
\end{equation}
The composite hockey-stick quantities have the sharp threshold
\begin{equation}
 \lim_{n\to\infty}\Q_n(r)=
 \begin{cases}
  1,&r<R,\\
  0,&r>R.
 \end{cases}
 \label{eq:threshold}
\end{equation}
 Consequently, for every $0<\varepsilon<1$,
\begin{equation}
 \boxed{\displaystyle
 \lim_{n\to\infty}\frac1n\Dht^\varepsilon(\rhon\|\F_n)=\lim_{n\to\infty}\frac{1}{n}\inf_{\sigma\in\F_n}D(\rhon\|\sigma)=R,}
 \label{eq:main-limit}
\end{equation}
where each $r<R$ is achievable by tests with vanishing type-I error $\alpha_n(T)\to 0$. Moreover, we have the strong converse that if a sequence of tests $(T_n)$ satisfies
\begin{equation}
 \liminf_{n\to\infty}-\frac1n\log B_n(T_n)>R,
 \label{eq:SC-premise}
\end{equation}
then the type-I error $\alpha_n(T)\to 1$.
\end{theorem}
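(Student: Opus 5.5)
The proof splits into four steps: existence of the rate, the converse half of the threshold, the achievability half, and the translation into testing statements.

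\textbf{Step 1: existence of $R$.} Set $a_n:=\inf_{\sigma\in\F_n}D(\rhon\|\sigma)$. Additivity of Araki relative entropy on tensor products, $D(\rho^{\otimes(n+m)}\|\sigma\otimes\omega)=D(\rhon\|\sigma)+D(\rho^{\otimes m}\|\omega)$, together with \ref{ass:tensor} gives $a_{n+m}\le a_n+a_m$. Applying \ref{ass:reference} to $\tau^{\otimes n}$ gives $a_n\le nC<\infty$. Fekete's lemma then yields $R=\lim a_n/n=\inf a_n/n\in[0,C]$.

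\textbf{Step 2: converse half of the threshold ($r>R$).} The aim is $\Q_n(r)\to0$, which will also give the strong converse.
\begin{itemize}
\item Fix $R<r'<r$ and $k$ with $a_k/k<r'$, and pick $\sigma_k\in\F_k$ with $D(\rho^{\otimes k}\|\sigma_k)<kr'$.
\item Write $n=mk+j$ with $0\le j<k$, and use $\sigma_k^{\otimes m}\otimes\tau^{\otimes j}\in\F_n$.
\item For i.i.d. pairs with finite relative entropy, a weak law of large numbers for the relative entropy in the von Neumann setting is $\hock_{e^{m s}}(\rho^{\otimes k m}\|\sigma_k^{\otimes m})\to0$ for $s>D(\rho^{\otimes k}\|\sigma_k)$. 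I would derive it from the integral representation of $D$ via hockey-stick divergences, or from the Petz--R\'enyi bound $\hock_t(\omega\|\sigma)\le t^{-(\alpha-1)}\,\omega(\ldots)$ with $\alpha\downarrow1$. This requires $D_\alpha$ to be finite near $1$, which may fail; in that case I would truncate via the modular testing bound mentioned in the abstract.
\item The $j<k$ leftover copies of $\tau$ cost at most a bounded factor, using that $\hock_t$ is monotone under tensoring with a fixed pair.
\end{itemize}
\emph{Strong converse.} For any test $T$ and any $\sigma\in\F_n$ we have $\rhon(T)\le e^{nr}\sigma(T)+\hock_{e^{nr}}(\rhon\|\sigma)$. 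If $B_n(T_n)\le e^{-nr''}$ with $r''>r>R$, then $\rhon(T_n)\le e^{-n(r''-r)}+\Q_n(r)\to0$.

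\textbf{Step 3: achievability half of the threshold ($r<R$).} This is the main obstacle: we need $\Q_n(r)\to1$, uniformly over the non-i.i.d. convex set $\F_n$.
\begin{itemize}
\item \emph{Minimax.} Since $\hock_t(\omega\|\sigma)=\sup_{0\le T\le 1}(\omega-t\sigma)(T)$ is affine in $\sigma$ and the test set is convex and weak$^*$ compact, a minimax theorem (Sion/Kneser, no closedness of $\F_n$ needed on that side) gives
\[
\Q_n(r)=\sup_{T}\inf_{\sigma\in\F_n}\bigl(\rhon(T)-e^{nr}\sigma(T)\bigr).
\]
So it suffices to find a single test $T_n$ with $\rhon(T_n)\to1$ and $e^{nr}B_n(T_n)\to0$.
\item \emph{Lower bound for a fixed $\sigma$.} The integral representation $D(\omega\|\sigma)=\int\bigl(\hock_t(\omega\|\sigma)+\ldots\bigr)\,dt/t$ converts $D(\rhon\|\sigma)\ge a_n\ge nR$ into lower bounds on $\hock$. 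Averaging alone only gives $\liminf\Q_n(r)>0$, because the Markov-type bound loses mass at large $t$. The large-$t$ tail is controlled using $D(\rhon\|\tau^{\otimes n})=nC$ and the bound $\hock_t(\rhon\|\sigma)\ge\hock_t(\rhon\|\tau^{\otimes n})$-type comparisons.
\item \emph{Amplification to $1$.} Here I would follow Hayashi--Yamasaki. Suppose $\Q_n(r)\le1-\delta$ along a subsequence. The optimal mixed $\sigma_n$ from the minimax then satisfies $\rhon(T)\le 1-\delta+e^{nr}\sigma_n(T)$ for all $T$. Tensor blocks $\sigma_n^{\otimes m}\in\F_{nm}$ and use the converse estimate of Step 2 in reverse: this forces $D(\rho^{\otimes nm}\|\cdot)$ below $nm(r+o(1))<nmR$, contradicting $a_{nm}\ge nmR$.
\item This last step is a pinching-free substitute, using the modular testing bound. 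It is the technical heart, and the place where the lack of permutation invariance must be handled by tensor stability alone.
\end{itemize}

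\textbf{Step 4: consequences for testing.} For $r<R$, the test $T_n$ from the minimax has $\alpha_n(T_n)\to0$ and $B_n(T_n)\le e^{-nr}$. Hence $\liminf\frac1n\Dht^\varepsilon(\rhon\|\F_n)\ge r$ for every $\varepsilon$. Conversely, the strong converse of Step 2 gives $\limsup\frac1n\Dht^\varepsilon(\rhon\|\F_n)\le R$ for every $\varepsilon<1$. Together these yield \eqref{eq:main-limit}.
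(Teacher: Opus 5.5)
Your Steps 1 and 4, the minimax reduction, and the strong-converse inequality match the paper. The achievability half of Step 3, however, has a genuine gap.

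The amplification argument cannot work. For any $\sigma_n\in\F_n$ one has exactly $D(\rho^{\otimes nm}\|\sigma_n^{\otimes m})=m\,D(\rhon\|\sigma_n)\ge mnR$. The one-level hypothesis $\hock_{e^{nr}}(\rhon\|\sigma_n)\le1-\delta$ puts no upper bound on $D(\rhon\|\sigma_n)$, because the remaining mass may sit at arbitrarily large likelihood ratios; even $D(\rhon\|\sigma_n)=\infty$ is allowed in $\F_n$. In fact, applying the i.i.d.\ threshold to the pair $(\rhon,\sigma_n)$ gives $\hock_{e^{nmr}}(\rho^{\otimes nm}\|\sigma_n^{\otimes m})\to1$ as $m\to\infty$ whenever $r<\frac1nD(\rhon\|\sigma_n)<\infty$. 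That is fully consistent with your hypothesis, so no contradiction with $a_{nm}\ge nmR$ arises.

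The preceding bullet has the same defect. There is no order relation between an arbitrary $\sigma\in\F_n$ and $\tau^{\otimes n}$. Moreover, a \emph{lower} bound $\hock_t(\rhon\|\sigma)\ge\hock_t(\rhon\|\tau^{\otimes n})$ would not control the large-$t$ tail of $\sigma$ from above anyway. In your Step 3, convexity is used only for the minimax; the paper uses it a second time, and that second use is exactly what is missing.

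The missing idea is convex mixing with a near-optimal block comparator. For arbitrary $\omega_n\in\F_n$ put $\nu_n=\tfrac12(\omega_n+\theta_n)\in\F_n$, where $\theta_n=\sigma_k^{\otimes m}\otimes\tau^{\otimes j}$ and $b=\frac1kD(\rho^{\otimes k}\|\sigma_k)<R+\delta$. Since $\nu_n\ge\omega_n/2$ and $\nu_n\ge\theta_n/2$, split the first integral in the representation of $D(\rhon\|\nu_n)$ at $2e^{nr}$ and $2e^{nb}$. For $0\le r<R$ this gives
\[
 R\le\frac1nD(\rhon\|\nu_n)\le r+(b-r)\,\hock_{e^{nr}}(\rhon\|\omega_n)+\frac{1+\log2}{n}+e_n,
 \qquad e_n:=\int_b^\infty\hock_{e^{nx}}(\rhon\|\theta_n)\dd x .
\]
Taking the infimum over $\omega_n$ yields $\liminf_n\Q_n(r)\ge(R-r)/(b-r)$ once $e_n\to0$. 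Letting $b\downarrow R$ then gives $\Q_n(r)\to1$ directly, with no amplification step. Using $\tau^{\otimes n}$ as the tail comparator would only give $(R-r)/(C-r)$, which is presumably the ``positive but not one'' bound you had in mind.

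The fact $e_n\to0$ is the same weak law you need in Step 2, and there your argument is also only gestured at. The $\alpha>1$ R\'enyi route can fail. The modular testing bound is a \emph{lower} bound on $\hock$, so by itself it cannot truncate an upper tail. What works is combining the two tools:
\begin{itemize}
\item The bound $\hock_t\ge1-t^sq(s)$, together with multiplicativity of $q$ and $-\frac1s\log q(s)\to D$ as $s\to0^+$, forces $\hock_{e^{nx}}(\rhon\|\theta_n)\to1$ for $x<b$.
\item The integral representation and additivity give $\int_0^\infty\hock_{e^{nx}}(\rhon\|\theta_n)\dd x\to b$.
\item Hence the area above $b$ vanishes, and monotonicity in $x$ gives $\hock_{e^{ns}}(\rhon\|\theta_n)\to0$ for every $s>b$.
\end{itemize}
Treating $\theta_n$ as a whole in this way also absorbs the $\tau^{\otimes j}$ leftovers. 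There your ``monotone under tensoring'' remark points the wrong way: by data processing, $\hock$ can only increase when a fixed pair is tensored on.
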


\subsection{Relation to previous work and organization}
For finite dimensional system $\M=B(\mathbb C^d)$, a full-rank reference state guarantees the finite-entropy assumption, so the above conclusion recovers the theorem of Hayashi--Yamasaki \cite{HY}. The extension here allows infinite-dimensional $B(H)$ as well as general von Neumann algebras of type II and~III. It extends the recent subalgebra result of Bergh, Gao and Rahaman \cite{BGR} and the von Neumann algebraic binary-testing and complements equipartition results in \cite{JOPS,PW,FGR,HM}. 

The proof uses Frenkel's integral representation \cite{Frenkel} of relative entropy, recently established in von Neumann algebras \cite{CFLS,KSLC}. It is combined with modular testing estimates, $L_1$-convergence of fixed-block testing, convex mixing and minimax argument. Section~\ref{sec:analytic} collects the definitions and preliminary results. Section~\ref{sec:proof} proves Theorem~\ref{thm:main}. 

{\bf \noindent Acknowledgements.}
The author thank Kun Fang, Bjarne Bergh and Mizanur Rahaman for helpful discussions on this problem. 
The author is supported in part by the National Natural Science Foundation of China (grant no.~12401163) and the Department of Science and Technology of Hubei Province (project nos.~2025EHA041 and~2025AFA044).

\section{Preliminaries}\label{sec:analytic}
We collect the definitions, analytic estimates, and block constructions used in the proof.

\subsection{Normal states and relative entropy}\label{subsec:entropy-definition} Let $\M$ be a von Neumann algebra with predual $\M_*$.
The normal state space of $\M$ is
\[
 \Sn(\M)=\{\omega\in\M_*:\omega(\one)=1,\ \omega\ge0\}.
\]
All tensor products of von Neumann algebras are spatial, and the identification $\M_{n+m}=\M_n\bar\otimes\M_m$ preserves the order of the factors. 
Recall that in the Stein exponent \eref{eq:beta},
\begin{align}
 \beta_{n,\varepsilon}
 &:=\inf_{\substack{0\le T\le\one\\\rhon(T)\ge1-\varepsilon}}B_n(T),\qquad
 \Dht^\varepsilon(\rhon\|\F_n):=-\log\beta_{n,\varepsilon}.
\end{align}
a test $T$ must work for every $\sigma\in\F_n$; the test cannot depend on the alternative $\sigma \in \F_n$.

For two normal states $\omega$ and $\sigma$, Araki relative entropy \cite{araki,araki2} is defined by
\begin{align}
  D(\omega\|\sigma)
  :=
    -\langle\xi_\omega,
       (\log\Delta_{\sigma,\omega})\xi_\omega\rangle,   \label{eq:araki}
\end{align}
provided that the supports $s(\omega)\leq s(\sigma)$, and $+\infty$ otherwise. Here $\xi_\omega$ is the unique vector in the positive cone $\mathcal P$ of the standard form $(\mathcal M,\mathcal H,J,\mathcal P)$ state $\omega$, $\Delta_{\sigma,\omega}$ is the relative modular operator. More specifically, let $\supp(\omega)$ be the support projection of $\omega$ in $\M$ and $r(\omega)=J\supp(\omega) J$ the support projection in the commutant $\M'=J\M J$. Using the relative Tomita convention of \cite[Definition~3.1 and Lemma~3.2, pp.~21--22]{CFLS} and \cite[Section~5.2, Proposition~5.2, p.~30]{JOPS}, the anti-linear operator $S_{\sigma,\omega}$ is the closure of
\[
 a\xi_\omega+z\longmapsto \supp(\omega) a^*\xi_\sigma,
 \qquad a\in\M,\quad z\in(\one-r(\omega))H,
\]
and $$\Delta_{\sigma,\omega}=S_{\sigma,\omega}^*S_{\sigma,\omega} \qquad \text{ and }\qquad S_{\sigma,\omega}=J\Delta_{\sigma,\omega}^{1/2}.$$ The support of $\Delta_{\sigma,\omega}$ is $\supp(\omega)r(\omega)$.
These definitions apply to arbitrary positive normal functionals without full support. 

 The logarithmic expectation in \eqref{eq:araki} is understood as the 
spectral integral
\[
  -\langle\xi_\omega,
    (\log\Delta_{\sigma,\omega})\xi_\omega\rangle
  :=
  \int_{(0,\infty)}(-\log t)\,
  d\langle\xi_\omega,E_{\sigma,\omega}(t)\xi_\omega\rangle,
\]
where $E_{\sigma,\omega}=1_{[0,t]}(\Delta_{\sigma,\omega})$ is the spectral resolution of
$\Delta_{\sigma,\omega}$. This integral may equal $+\infty$
even when $s(\omega)\leq s(\sigma)$. Its value is independent of the chosen vector representation and corresponding spatial derivative. The relative modular operator and its support conventions are recalled in Subsection~\ref{subsec:modular}. In $B(H)$ it coincides
with the usual Umegaki relative entropy of density operators,
\[ D(\omega\|\sigma)=\Tr\bigl(\omega(\log\omega-\log\sigma)\bigr),\]
understood by its standard extended-value definition.

\subsection{Hockey-stick divergences}
For a hermitian normal functional $\eta\in \M_*$, the Jordan decomposition $\eta=\eta_+-\eta_-$ satisfies
\begin{equation}
 \eta_+(\one)=\max_{0\le T\le\one}\eta(T)
             =\frac{\norm{\eta}+\eta(\one)}2.
 \label{eq:jordan-variational}
\end{equation}
Indeed, $\eta(T)\le\eta_+(T)\le\eta_+(\one)$, and equality holds at the support projection of $\eta_+$. Thus
\begin{equation}
 \hock_t(\omega\|\sigma)
 =\max_{0\le T\le\one}\{\omega(T)-t\sigma(T)\}
 =\frac{\norm{\omega-t\sigma}+1-t}{2}.
 \label{eq:hock-variational}
\end{equation}

\begin{lemma}[Basic bounds]\label{lem:hock}
For two normal states $\omega$ and $\sigma$, $0\le\hock_t(\omega\|\sigma)\le1$. It is nonincreasing in $t$, and
\begin{align}
 \sigma\ge c\theta\quad&\Longrightarrow\quad
 \hock_t(\omega\|\sigma)\le\hock_{ct}(\omega\|\theta),
 \qquad c>0,\label{eq:hock-order}\\
 \big|\hock_t(\omega\|\sigma)-\hock_t(\omega'\|\sigma')\big|
 &\le \frac{1}{2}\norm{\omega-\omega'}+\frac{t}{2}\norm{\sigma-\sigma'}.
 \label{eq:hock-continuity}
\end{align}
In particular, $t\mapsto\hock_t(\omega\|\sigma)$ is continuous.
\end{lemma}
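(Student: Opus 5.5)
All assertions of Lemma~\ref{lem:hock} will be read off from the variational formula \eqref{eq:hock-variational},
\[
\hock_t(\omega\|\sigma)=\max_{0\le T\le\one}\{\omega(T)-t\sigma(T)\},
\]
which is already established via the Jordan decomposition \eqref{eq:jordan-variational}. Each bound comes from plugging in a suitable test or comparing two such maxima, so no modular theory is needed.

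\emph{Range and monotonicity.} Taking $T=0$ gives $\hock_t\ge0$. For every test, $\omega(T)-t\sigma(T)\le\omega(T)\le\omega(\one)=1$, because $\sigma(T)\ge0$ and $t>0$; hence $\hock_t\le1$. If $t\le t'$, then $\omega(T)-t'\sigma(T)\le\omega(T)-t\sigma(T)$ for each fixed $T$. Taking the maximum over $T$ shows that $t\mapsto\hock_t(\omega\|\sigma)$ is nonincreasing.

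\emph{Order bound \eqref{eq:hock-order}.} If $\sigma\ge c\theta$ as positive functionals, then for every test $T$ we have $\sigma(T)\ge c\,\theta(T)$, since $T\ge0$. Therefore
\[
\omega(T)-t\sigma(T)\le\omega(T)-ct\,\theta(T)\le\hock_{ct}(\omega\|\theta).
\]
Maximizing the left-hand side over $T$ gives \eqref{eq:hock-order}.

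\emph{Continuity bound \eqref{eq:hock-continuity}.} This is the one step needing care, because of the factor $\tfrac12$. Write
\[
\omega(T)-t\sigma(T)=\bigl[\omega'(T)-t\sigma'(T)\bigr]+(\omega-\omega')(T)-t(\sigma-\sigma')(T).
\]
For a hermitian $\eta$ with $\eta(\one)=0$, \eqref{eq:jordan-variational} gives $\sup_{0\le T\le\one}\eta(T)=\eta_+(\one)=\tfrac12\norm{\eta}$. The functional $\eta=\omega-\omega'$ qualifies, since both are states, so $(\omega-\omega')(T)\le\tfrac12\norm{\omega-\omega'}$. The functional $\sigma'-\sigma$ also has total mass zero, so $-t(\sigma-\sigma')(T)=t(\sigma'-\sigma)(T)\le\tfrac t2\norm{\sigma-\sigma'}$. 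Maximizing over $T$ yields
\[
\hock_t(\omega\|\sigma)\le\hock_t(\omega'\|\sigma')+\tfrac12\norm{\omega-\omega'}+\tfrac t2\norm{\sigma-\sigma'}.
\]
The reverse inequality follows by exchanging the roles of the primed and unprimed states, which gives \eqref{eq:hock-continuity}.

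\emph{Continuity in $t$.} For fixed $\omega,\sigma$ and each test $T$, the difference at two parameters satisfies
\[
\bigl|[\omega(T)-t\sigma(T)]-[\omega(T)-s\sigma(T)]\bigr|=|t-s|\,\sigma(T)\le|t-s|.
\]
So the two maxima differ by at most $|t-s|$, and $t\mapsto\hock_t(\omega\|\sigma)$ is $1$-Lipschitz, in particular continuous. Alternatively, continuity follows at once from the closed formula $\tfrac12\bigl(\norm{\omega-t\sigma}+1-t\bigr)$ in \eqref{eq:hock-variational}.
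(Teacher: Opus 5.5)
Your proof is correct and follows the paper's own argument: every claim is read off the variational formula \eqref{eq:hock-variational}, using the test $T=0$, the bound $\omega(T)\le 1$, and the fact that enlarging the second functional lowers $\omega(T)-t\sigma(T)$ for each $T$. The only cosmetic difference is in \eqref{eq:hock-continuity}: the paper uses the closed form $\tfrac12(\norm{\omega-t\sigma}+1-t)$ with the triangle inequality, whereas you apply \eqref{eq:jordan-variational} to the mass-zero functionals $\omega-\omega'$ and $\sigma'-\sigma$, which is the same identity used from the other side.
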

\begin{proof} The proof is the same as the finite dimensional case \cite[Lemma~2.1 and Corollary~5.7]{HT24}. We include an argument for completeness in the von Neumann algebra setting.
The lower bound follows by taking $T=0$ in \eref{eq:hock-variational}, and the upper bound follows from $\omega(T)\le1$. Increasing the second functional lowers the expression for every $T$. The continuity bound \eref{eq:hock-continuity} follows from the norm expression \eqref{eq:hock-variational}.
\end{proof}

We will use the following Frenkel's integral representation of relative entropy \cite[Theorem~6]{Frenkel}. This formula is extended to infinite-dimensional $B(H)$ by Jen\v{c}ov\'a \cite[Theorem~1]{Jencova}, and recently established on von Neumann algebras by Correa da Silva, Fr\"ob, Lechner and Sangaletti \cite[Theorem~3.19]{CFLS} and independently by Koßmann,  Schwonnek, Liu, and Cheng in \cite[Theorem~6.5]{KSLC}.

\begin{theorem}[Integral representation of relative entropy]\label{thm:integral}
For two normal states $\omega,\sigma$ on an arbitrary von Neumann algebra,
\begin{equation}
 D(\omega\|\sigma)
 =\int_1^\infty\hock_t(\omega\|\sigma)\,\frac{\dd t}{t}
  +\int_1^\infty\hock_t(\sigma\|\omega)\,\frac{\dd t}{t^2}.
 \label{eq:entropy-integral}
\end{equation}
\end{theorem}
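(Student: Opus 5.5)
Since Theorem~\ref{thm:integral} is quoted from \cite{Frenkel,Jencova,CFLS,KSLC}, in the paper it is simply invoked. If I had to prove it, I would first establish the finite-dimensional identity and then pass to general von Neumann algebras by an approximation argument in which both sides of \eqref{eq:entropy-integral} behave in the same way.

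\emph{The scalar identity.} For $x>0$ put $g(x)=x\log x-x+1\ge0$. A direct computation gives
\[
\int_1^\infty (x-t)_+\,\frac{\dd t}{t}+\int_1^\infty (1-tx)_+\,\frac{\dd t}{t^2}=g(x).
\]
Only one of the two terms is nonzero: for $x\ge1$ the first term is $x\log x-x+1$, and for $x\le1$ the second term is $\int_1^{1/x}(t^{-2}-x t^{-1})\dd t=1-x+x\log x$. Hence, when $\omega$ and $\sigma$ commute with density $x=d\omega/d\sigma$, integrating against $\sigma$ and using $\int(x-1)\dd\sigma=0$ yields \eqref{eq:entropy-integral}.

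\emph{The finite-dimensional case.} For noncommuting density matrices, $(\omega-t\sigma)_+$ is not a functional calculus of a single operator, so the scalar argument does not apply directly. I would follow Frenkel and treat $F(t):=\hock_t(\omega\|\sigma)$ as a convex function of $t$. Its derivative is $F'(t)=-\sigma(P_t)$, where $P_t$ is the support projection of $(\omega-t\sigma)_+$; this follows from \eqref{eq:hock-variational} by an envelope argument. Integrating by parts turns $\int_1^\infty F(t)\,\dd t/t$ into an integral of $\log t$ against the Stieltjes measure $-\dd\bigl(\sigma(P_t)\bigr)$, and does the same for the second term with the roles of $\omega$ and $\sigma$ exchanged. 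I would then compare this with the resolvent formula
\[
-\log x=\int_0^\infty\Bigl(\frac1{x+s}-\frac1{1+s}\Bigr)\dd s,
\]
applied to the relative modular operator $\Delta_{\sigma,\omega}$ (for matrices, $\Delta_{\sigma,\omega}=L_\sigma R_{\omega^{-1}}$). The aim is to match the two expressions by exploiting the pencil structure of $\omega-t\sigma$.

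\emph{Extension to general von Neumann algebras.} Both sides of \eqref{eq:entropy-integral} are monotone under normal unital completely positive maps. For Araki's entropy this is Uhlmann monotonicity. For the right-hand side it holds integrand by integrand, because $\hock_t$ is a supremum over tests by \eqref{eq:hock-variational}. Both sides are also jointly lower semicontinuous. For an injective algebra I would take an increasing net of finite-dimensional subalgebras with conditional expectations. Araki's entropy converges along this net by the martingale property, and each $\hock_t$ converges monotonically, so the integrals converge by monotone convergence. Lemma~\ref{lem:hock} keeps the integrands bounded by $1$ and continuous in $t$. A general $\M$ would first be reduced via Haagerup's reduction, or via a countably decomposable reduction supported on $\supp(\omega)\vee\supp(\sigma)$.

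\emph{Expected obstacle.} I expect two hard points. The first is making the finite-dimensional integration by parts rigorous when $P_t$ jumps at generalized eigenvalues. The second, more serious, is non-injective algebras, where finite-dimensional approximation is not available. There I would instead argue directly in the standard form, writing $\hock_t$ through the spectral data of $\Delta_{\sigma,\omega}$ and using Kosaki's variational formula for Araki's entropy, as in \cite{CFLS,KSLC}.
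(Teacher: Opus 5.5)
The paper gives no proof of this theorem; it imports it from Frenkel (matrices), Jen\v{c}ov\'a ($B(H)$) and \cite{CFLS,KSLC} (general von Neumann algebras), so your opening remark matches the paper. Your scalar identity and the commuting case are correct. The self-contained outline, however, has a gap exactly where the content of the theorem lies: the non-commutative finite-dimensional identity.

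\emph{The finite-dimensional step.} After integrating by parts you only state the aim of matching the result with the resolvent formula for $\Delta_{\sigma,\omega}$ ``via the pencil structure''. No such matching can be done $t$ by $t$. The projections $P_t$ are spectral projections of $\omega-t\sigma$, which rotate continuously with $t$, so $-\dd\,\sigma(P_t)$ is not just a sum of jumps at generalized eigenvalues. More fundamentally, $\hock_t(\omega\|\sigma)$ is not a function of the spectral measure $\mu_{\omega,\sigma}$ of $\Delta_{\sigma,\omega}$ in $\xi_\omega$. In finite dimensions $\mu_{\omega,\sigma}$ is the likelihood-ratio law of the Nussbaum--Szko{\l}a pair $(p,q)$. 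Take faithful qubit states close to pure states $\psi,\phi$ with $|\langle\psi,\phi\rangle|^2=F$. Then $\hock_1\approx\sqrt{1-F}$, whereas the commuting pair $(p,q)$, which has the same $\mu$, gives $\hock_1\approx 1-F$. Only the $t$-integral is a spectral functional, so any proof must exploit cancellations that appear after integrating in $t$.

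Frenkel obtains these cancellations differently. $D(\rho\|\sigma)$ is the Bregman divergence of $\Tr\eta$ with $\eta(x)=x\log x$, so $D(\rho\|\sigma)=\int_0^1 s\,\phi''(s)\dd s$, where $\phi(s)=\Tr\eta(\rho_s)$ and $\rho_s=(1-s)\rho+s\sigma$. The genuinely non-commutative input is the second-derivative formula $\frac{\dd^2}{\dd r^2}\Tr\eta(A+rX)\big|_{r=0}=2\int_{\mathbb R}\Tr(A+tX)_-\,|t|^{-3}\dd t$. Since $\rho_s+t(\sigma-\rho)=\rho_{s+t}$, Fubini gives $D(\rho\|\sigma)=\int_{\mathbb R}\Tr(\rho_t)_-\,\frac{\dd t}{|t|(t-1)^2}$. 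The substitutions $\gamma=t/(t-1)$ on $t>1$ and $\gamma=1+1/|t|$ on $t<0$ turn this into \eqref{eq:entropy-integral}. Nothing in your outline replaces that second-derivative formula. The rigour of integrating by parts across jumps of $P_t$, which you flag as the main obstacle, is a side issue.

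\emph{The extension step.} For injective $\M$ your argument is sound. Both sides increase along an increasing net of finite-dimensional subalgebras with weak$^*$-dense union: Araki's entropy by martingale convergence, each $\hock_t$ by Kaplansky density, and the integrals by monotone convergence. Compressing to $e\M e$ with $e=\supp(\omega)\vee\supp(\sigma)$ also legitimately reduces to the $\sigma$-finite case. But Haagerup's reduction approximates a $\sigma$-finite algebra by finite \emph{tracial} algebras, not by finite-dimensional or injective ones. To use it you would first need the identity on tracial algebras with $\tau$-measurable densities, and the matrix argument does not supply that. Your fallback of expressing $\hock_t$ through the spectral data of $\Delta_{\sigma,\omega}$ fails for the reason given above. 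Hence for non-injective $\M$ (e.g.\ $L(\mathbb F_2)$, or non-injective type~III factors) the outline proves nothing, and one has to rely, as the paper does, on \cite{CFLS,KSLC}.
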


Since $\hock_t(\sigma\|\omega)\le1$, the second integral is between zero and one. Substituting $t=e^{nx}$ into the first integral gives the following useful form.

\begin{corollary}\label{cor:area}
For normal states $\omega_n,\sigma_n$ on any von Neumann algebra,
\begin{equation}
 \frac1nD(\omega_n\|\sigma_n)
 =\int_0^\infty\hock_{e^{nx}}(\omega_n\|\sigma_n)\dd x+\eta_n, 
 \qquad \text{ with }  \qquad 0\le\eta_n\le\frac1n.
 \label{eq:area}
\end{equation}
\end{corollary}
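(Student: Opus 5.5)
The plan is to obtain Corollary~\ref{cor:area} directly from Theorem~\ref{thm:integral}, applied to the pair $(\omega_n,\sigma_n)$. I expect a change of variables in the first integral and a trivial bound on the second integral to suffice.

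First I write
\[
 D(\omega_n\|\sigma_n)=\int_1^\infty\hock_t(\omega_n\|\sigma_n)\,\frac{\dd t}{t}+\int_1^\infty\hock_t(\sigma_n\|\omega_n)\,\frac{\dd t}{t^2}.
\]
Both integrands are nonnegative, and by Lemma~\ref{lem:hock} they are continuous in $t$. Hence both integrals are well defined in $[0,\infty]$, and the identity holds in the extended sense, including when $D=+\infty$.

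In the first integral I substitute $t=e^{nx}$ with $x\in[0,\infty)$. Then $\dd t/t=n\,\dd x$, which gives
\[
 \int_1^\infty\hock_t(\omega_n\|\sigma_n)\,\frac{\dd t}{t}=n\int_0^\infty\hock_{e^{nx}}(\omega_n\|\sigma_n)\dd x .
\]
This change of variables is valid for nonnegative measurable integrands even when the value is infinite, since $x\mapsto e^{nx}$ is a $C^1$ increasing bijection of $[0,\infty)$ onto $[1,\infty)$.

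For the remainder I set $\eta_n:=\frac1n\int_1^\infty\hock_t(\sigma_n\|\omega_n)\,t^{-2}\dd t$. The bounds $0\le\hock_t\le1$ from Lemma~\ref{lem:hock} give $0\le\eta_n\le\frac1n\int_1^\infty t^{-2}\dd t=\frac1n$. Dividing the representation by $n$ then yields \eqref{eq:area}.

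There is no real obstacle. The only point needing care is the case $D(\omega_n\|\sigma_n)=+\infty$. Since $\eta_n$ is finite, this case forces the first integral to be infinite, so both sides of \eqref{eq:area} equal $+\infty$ and the statement holds.
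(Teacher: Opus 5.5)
Your proposal is correct and follows essentially the paper's own argument: apply the integral representation of Theorem~\ref{thm:integral}, substitute $t=e^{nx}$ in the first integral, and use $0\le\hock_t\le1$ to bound the second integral by $\int_1^\infty t^{-2}\dd t=1$, which gives $0\le\eta_n\le 1/n$. Your extra remarks on measurability and on the case $D(\omega_n\|\sigma_n)=+\infty$ are sound refinements of the same route.
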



\subsection{A modular testing bound}\label{subsec:modular}
Denote the spectral measure of $\Delta_{\sigma,\omega}$ in $\xi_\omega$ as
\[ \mu_{\omega,\sigma}([0,t])=d\langle\xi_\omega,E_{\sigma,\omega}(t)\xi_\omega\rangle\]
The derivative needed below is proved directly, which does not require the additional support assumptions in \cite[Proposition~5.4]{JOPS}.

\begin{lemma}[Finite entropy gives the required modular derivative]\label{lem:moment}
Let $\omega,\sigma$ be normal states with $D(\omega\|\sigma)<\infty$. For $0<s<1$, define
\begin{equation}
 q_{\omega,\sigma}(s)
 :=\norm{\Delta_{\sigma,\omega}^{s/2}\xi_\omega}^{2}
 =\int_{(0,\infty)}\lambda^s\dd\mu_{\omega,\sigma}(\lambda).
 \label{eq:moment}
\end{equation}
Then $0<q_{\omega,\sigma}(s)\le1$ and
\begin{equation}
 \lim_{s\to 0^+}-\frac1s\log q_{\omega,\sigma}(s)
 =D(\omega\|\sigma).
 \label{eq:moment-limit}
\end{equation}
The expression in \eref{eq:moment} is interpreted as a quadratic-form value which does not require $\xi_\omega$ to lie in the domain of $\Delta_{\sigma,\omega}^{s}$. 
\end{lemma}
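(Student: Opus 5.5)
The plan is to treat $\mu:=\mu_{\omega,\sigma}$ as a measure on $(0,\infty)$ and reduce everything to elementary facts about $\lambda\mapsto\lambda^s$. We use the spectral theorem for the positive self-adjoint operator $\Delta_{\sigma,\omega}$.

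\textbf{Step 1: properties of $\mu$.} Since $D(\omega\|\sigma)<\infty$, we have $\supp(\omega)\le\supp(\sigma)$. We then identify the mass of $\mu$ and its first moment. The vector $\xi_\omega$ lies in the range of $\supp(\omega)r(\omega)$, which is the support of $\Delta_{\sigma,\omega}$. Hence $\mu(\{0\})=0$ and $\mu((0,\infty))=\|\xi_\omega\|^2=1$.

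Next, $\xi_\omega=\one\cdot\xi_\omega$ is in the domain of $S_{\sigma,\omega}=J\Delta^{1/2}_{\sigma,\omega}$. This gives
\[
\int\lambda\,\dd\mu=\|\Delta^{1/2}\xi_\omega\|^2=\|\supp(\omega)\xi_\sigma\|^2=\sigma(\supp(\omega))\le 1 .
\]
By Jensen's inequality (concavity of $\lambda^s$), $q(s)=\int\lambda^s\dd\mu\le(\int\lambda\,\dd\mu)^s\le1$. Also $q(s)>0$, because $\mu$ is a probability measure on $(0,\infty)$ and $\lambda^s>0$ there. The quadratic-form interpretation is automatic, since $\int\lambda^s\dd\mu\le 1+\int\lambda\,\dd\mu<\infty$.

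\textbf{Step 2: the limit.} By definition, $D=\int(-\log\lambda)\,\dd\mu$. The positive part of this integrand integrates to something finite, because $\int_{\lambda\ge1}\log\lambda\,\dd\mu\le\int\lambda\,\dd\mu<\infty$. So $D<\infty$ means $\log\lambda\in L^1(\mu)$.

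Set $f(s)=\log q(s)$ for $s\in[0,1]$. It is convex by Hölder, with $f(0)=0$ and $f(1)\le0$. I will show $f'(0^+)=\int\log\lambda\,\dd\mu=-D$. Then $-f(s)/s\to D$ follows immediately.

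To get the derivative, write
\[
\frac{q(s)-1}{s}=\int\frac{\lambda^s-1}{s}\,\dd\mu .
\]
The integrand $(\lambda^s-1)/s$ decreases monotonically to $\log\lambda$ as $s\downarrow0$, by convexity of $s\mapsto\lambda^s$. At $s=1$ it equals $\lambda-1$, which is integrable, so it is dominated from above by an integrable function. The monotone convergence theorem (the decreasing version, with an integrable upper bound at $s=1$) then gives
\[
\frac{q(s)-1}{s}\to\int\log\lambda\,\dd\mu=-D\in\mathbb R .
\]
Hence $q(s)\to1$, and
\[
\frac{\log q(s)}{s}=\frac{\log q(s)}{q(s)-1}\cdot\frac{q(s)-1}{s}\to-D .
\]

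\textbf{Main obstacle.} The delicate point is the support bookkeeping in Step 1, with no faithfulness assumed. We need two facts: that $\xi_\omega$ lies in the support $\supp(\omega)r(\omega)$ of $\Delta_{\sigma,\omega}$, so $\mu$ has no atom at $0$, and that $\|\Delta^{1/2}\xi_\omega\|^2=\sigma(\supp(\omega))$ under the stated convention for $S_{\sigma,\omega}$.

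The first holds because $\supp(\omega)\xi_\omega=\xi_\omega=J\supp(\omega)J\xi_\omega$. The second follows from $S_{\sigma,\omega}\xi_\omega=\supp(\omega)\xi_\sigma$ together with the isometry of $J$.

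Once these are verified, the rest is scalar measure theory.
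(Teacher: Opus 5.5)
Your route is essentially the paper's. You take total mass one, get the first moment $\int\lambda\dd\mu=\sigma(\supp(\omega))\le1$ from $S_{\sigma,\omega}\xi_\omega=\supp(\omega)\xi_\sigma$, apply Jensen, show $\log\lambda$ is integrable, and differentiate $q$ at $0^+$ before using the chain rule. You replace the paper's explicit dominating functions with monotone convergence: $(\lambda^s-1)/s$ decreases to $\log\lambda$, with the integrable majorant $\lambda-1$ at $s=1$. That is a perfectly good, slightly cleaner variant.

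The step that fails as you justify it is $\mu(\{0\})=0$. The support of $\Delta_{\sigma,\omega}$ is $\supp(\sigma)r(\omega)$, not $\supp(\omega)r(\omega)$; the sentence in the preliminaries saying otherwise is a slip. Indeed, $\Delta_{\sigma,\omega}^{1/2}(a\xi_\omega)=JS_{\sigma,\omega}(a\xi_\omega)=r(\omega)\bigl(Ja^*J\bigr)\xi_\sigma$ lies in $r(\omega)\overline{\M'\xi_\sigma}=\supp(\sigma)r(\omega)H$, so the range of $\Delta_{\sigma,\omega}^{1/2}$ sits inside $\supp(\sigma)r(\omega)H$. Equality is the kernel formula of JOPS, Proposition~5.2(2).

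For a concrete counterexample, take $M_2$ in its Hilbert--Schmidt standard form, with $\omega,\sigma$ the pure states given by $e_{11},e_{22}$. Then $\Delta_{\sigma,\omega}x=e_{22}xe_{11}$. So $\xi_\omega=e_{11}$ lies in $\supp(\omega)r(\omega)H$ but also in $\ker\Delta_{\sigma,\omega}$, and $\mu=\delta_0$.

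Your argument never uses $\supp(\omega)\le\supp(\sigma)$, even though you state it. It would therefore give $\mu(\{0\})=0$ for every pair, which is false. The correct computation is $\mu(\{0\})=\norm{(\one-\supp(\sigma)r(\omega))\xi_\omega}^2=\omega(\one-\supp(\sigma))$. This vanishes precisely because finite entropy forces $\supp(\omega)\le\supp(\sigma)$, which is how the paper argues.

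The point is not cosmetic. You need $\mu((0,\infty))=1$ to get $q(s)>0$, the identity $(q(s)-1)/s=\int(\lambda^s-1)/s\dd\mu$, and $q(s)\to1$. With this repair your proof is complete.
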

\begin{proof}
Write $\mu=\mu_{\omega,\sigma}$. Finite relative entropy implies
$\supp(\omega)\le\supp(\sigma)$, so $\mu(\{0\})=0$ by the kernel formula in \cite[Proposition~5.2(2)]{JOPS}. Since $\omega$ is a state, $\mu$ has total mass one. The identity $\Delta_{\sigma,\omega}^{1/2}\xi_\omega=r(\omega)\xi_\sigma$ from \cite[Proposition~5.2(4)]{JOPS}, together with $J\xi_\sigma=\xi_\sigma$, gives
\begin{align}
 \int\lambda\dd\mu(\lambda)
 =\norm{\Delta_{\sigma,\omega}^{1/2}\xi_\omega}^{2}
 =\norm{r(\omega)\xi_\sigma}^{2}
 =\norm{\supp(\omega)\xi_\sigma}^{2}
 =\sigma(\supp(\omega))\le1.
 \label{eq:first-moment}
\end{align} Jensen's inequality for the concave function $\lambda\mapsto \lambda^s$ now gives $q(s)\le1$, and positivity follows from $\mu((0,\infty))=1$. Araki's spectral formula is
\begin{equation}
 D(\omega\|\sigma)=-\int\log\lambda\dd\mu(\lambda).
 \label{eq:araki-spectral}
\end{equation}
The positive part of $\log\lambda$ is integrable by \eref{eq:first-moment}. Finiteness in \eref{eq:araki-spectral} therefore also makes its negative part integrable. Consequently,
$\int|\log\lambda|\dd\mu<\infty$.

For $0<s\le1/2$ and $0<\lambda\le1$,
\[
 \left|\frac{\lambda^s-1}{s}\right|\le-\log\lambda.
\]
For $\lambda\ge1$, the mean value theorem for $s\in [0,1/2]$ yields
\[
 0\le\frac{\lambda^s-1}{s}
 \le\lambda^{1/2}\log\lambda
 \le\frac2e\lambda.
\]
These bounds are integrable and independent of $s$. Dominated convergence gives
\[ \qquad q(s)\longrightarrow1, \qquad 
 \lim_{s \to 0^+}\frac{q(s)-1}{s}
 =\int\log\lambda\dd\mu=-D(\omega\|\sigma),
\]
The logarithm is differentiable at one, so \eref{eq:moment-limit} follows from the chain rule.
\end{proof}
The following is from \cite[Proposition~3.8, p.~25]{CFLS}. We include the argument for completeness.
\begin{lemma}[Modular testing bound]\label{lem:modular-bound}
Let $\omega,\sigma$ be two normal states and $q_{\omega,\sigma}(s)$ be the quadratic-form as above. For $t>0$ and $0<s<1$,
\begin{equation}
 \hock_t(\omega\|\sigma)
 \ge1-t^s q_{\omega,\sigma}(s).
 \label{eq:modular-bound}
\end{equation}
\end{lemma}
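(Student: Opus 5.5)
We would prove \eqref{eq:modular-bound} by exhibiting a single test. By \eqref{eq:hock-variational}, $\hock_t(\omega\|\sigma)=\max_{0\le T\le\one}\{\omega(T)-t\sigma(T)\}$, so it suffices to find a projection $P\in\M$ with
\[
\omega(\one-P)+t\,\sigma(P)\le t^s q_{\omega,\sigma}(s).
\]
The natural candidate comes from a spectral projection of $\Delta_{\sigma,\omega}$ at level $1/t$, i.e.\ ``accept $\omega$ where the likelihood ratio $\omega/\sigma$ exceeds $t$''. Since $\Delta_{\sigma,\omega}$ is not affiliated with $\M$, this projection cannot be taken directly, and the plan is organized around how to get around that.

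\emph{Scalar estimate.} With $\mu=\mu_{\omega,\sigma}$ we have $\mu\bigl(\{\lambda\ge 1/t\}\bigr)\le t^s\int\lambda^s\dd\mu$ and $t\int_{\lambda<1/t}\lambda\dd\mu\le t^s\int_{\lambda<1/t}\lambda^s\dd\mu$. This uses $t\lambda\le(t\lambda)^s$ for $t\lambda<1$ and $1\le(t\lambda)^s$ for $t\lambda\ge1$. Adding the two gives
\[
\mu(\lambda\ge1/t)+t\int_{\lambda<1/t}\lambda\dd\mu\le t^s q(s).
\]
This is precisely the ``error sum'' for the commutative (Neyman--Pearson) test.

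\emph{Noncommutative step.} The main obstacle is realizing this error sum by an actual test in $\M$. First we would reduce to the case $D(\omega\|\sigma)<\infty$ or $\supp(\omega)\le\supp(\sigma)$. If the support condition fails, we choose $T=\one-\supp(\sigma)$ plus a correction, or use that the mass $\mu(\{0\})$ is handled by the same bound with $\lambda^s$ interpreted on $(0,\infty)$; this is an edge case to check.

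For the main case I would follow the CFLS route. By the Jordan decomposition, $\hock_t(\omega\|\sigma)=1-\min_T\{\omega(\one-T)+t\sigma(T)\}$, and the minimum equals $\tfrac12(1+t-\norm{\omega-t\sigma})$. So it suffices to prove the Chernoff-type inequality
\[
\tfrac12\bigl(\omega(\one)+t\sigma(\one)-\norm{\omega-t\sigma}\bigr)\le \langle\xi_\omega,\Delta_{t\sigma,\omega}^s\xi_\omega\rangle=t^s q(s).
\]
This is the von Neumann algebra version of Audenaert's inequality $\Tr A^{1-s}B^s\ge\tfrac12\Tr(A+B-|A-B|)$. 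Here $\langle\xi_\omega,\Delta_{\psi,\omega}^s\xi_\omega\rangle$ plays the role of $\Tr A^{1-s}B^s$, and it is jointly monotone and concave in suitable senses.

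The plan is to set $\psi=t\sigma$ and let $P=\supp((\omega-\psi)_+)$, so that $\omega-\psi=\eta_+-\eta_-$. Then $\omega\ge\omega-\eta_+$ and $\psi\ge\psi-\eta_-$, with $\omega-\eta_+=\psi-\eta_-=:\phi\ge0$. Monotonicity of $\psi\mapsto\Delta_{\psi,\omega}^s$ (operator monotonicity of $x^s$ via Connes' cocycle/spatial derivative comparison), together with the analogous monotonicity in $\omega$ for $(1-s)$, gives
\[
\langle\xi_\omega,\Delta_{\psi,\omega}^s\xi_\omega\rangle\ge\langle\xi_\phi,\Delta_{\phi,\phi}^s\xi_\phi\rangle=\phi(\one).
\]
Since $\phi(\one)=\tfrac12(\omega(\one)+\psi(\one)-\norm{\omega-\psi})$, this is the required inequality. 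Justifying the joint monotonicity of the sandwiched quantity $Q_s(\omega,\psi)=\langle\xi_\omega,\Delta_{\psi,\omega}^s\xi_\omega\rangle$ in both arguments for non-faithful states is the step I expect to be hardest. I would invoke it through the interpolation/Kosaki representation, or cite the CFLS proof directly.
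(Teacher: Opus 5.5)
\textbf{Where your proposal agrees with the paper.} Your reduction is the same as the paper's. By \eqref{eq:hock-variational}, $\hock_t(\omega\|\sigma)=1-\tfrac12\bigl(1+t-\norm{\omega-t\sigma}\bigr)$, and with $\Delta_{t\sigma,\omega}=t\Delta_{\sigma,\omega}$ the lemma is exactly the generalized Powers--St\o rmer inequality $\tfrac12\bigl(\varphi(\one)+\psi(\one)-\norm{\varphi-\psi}\bigr)\le\norm{\Delta_{\psi,\varphi}^{s/2}\xi_\varphi}^2$ at $\varphi=\omega$, $\psi=t\sigma$. The paper simply cites this inequality from Ogata. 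It holds for arbitrary positive normal functionals, so you need neither the support case split nor the scalar Neyman--Pearson estimate. Your fallback of citing CFLS is therefore the paper's proof. The self-contained argument you sketch for the inequality, however, does not work.

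\textbf{The gap.} The false step is the claim $\phi:=\omega-\eta_+=\psi-\eta_-\ge0$. In fact $\phi=\tfrac12(\omega+\psi-|\omega-\psi|)$, and $A+B\ge|A-B|$ fails for noncommuting positive $A,B$. In $M_2(\mathbb C)$ take $\omega=|0\rangle\langle0|$ and $\psi=|+\rangle\langle+|$ (so $t=1$). Then $|\omega-\psi|=\one/\sqrt2$, and $\phi$ has eigenvalues $\tfrac12$ and $\tfrac12(1-\sqrt2)<0$.

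No other choice of minorant rescues the strategy. A positive $\phi'\le\omega$ is a multiple of $|0\rangle\langle0|$, and a positive $\phi'\le\psi$ is a multiple of $|+\rangle\langle+|$. So the only common positive minorant is $0$, while $\tfrac12(\omega(\one)+\psi(\one)-\norm{\omega-\psi})=1-1/\sqrt2>0$. Bounding $\langle\xi_\omega,\Delta_{\psi,\omega}^s\xi_\omega\rangle$ from below by monotonicity down to a common positive minorant can therefore never give the result. The obstacle is not monotonicity for non-faithful states, which is fine; it is that no such $\phi$ exists.

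\textbf{How to repair it.} What does hold is the upper domination $\omega\le\psi+\eta_+$. Audenaert's argument, extended by Ogata to relative modular operators, runs a telescoping chain on it. In matrix notation with $P=(A-B)_+$:
\begin{align*}
\Tr A-\Tr A^{1-s}B^{s}&=\Tr A^{1-s}(A^{s}-B^{s})\le\Tr A^{1-s}\bigl((B+P)^{s}-B^{s}\bigr)\\
&\le\Tr (B+P)^{1-s}\bigl((B+P)^{s}-B^{s}\bigr)\le\Tr P .
\end{align*}
Each step uses operator monotonicity of $x^s$ or $x^{1-s}$ together with $A\le B+P$. This gives $\Tr A^{1-s}B^s\ge\Tr A-\Tr P=\tfrac12\Tr(A+B-|A-B|)$. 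Transporting this chain to $\langle\xi_\omega,\Delta^s_{\psi,\omega}\xi_\omega\rangle$ is the real content of Ogata's theorem, so you should either reproduce that argument or cite it.
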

\begin{proof}
Ogata's generalized Powers--St\o rmer inequality \cite{Ogata} states
\[
 \frac{\varphi(\one)+\psi(\one)-\norm{\varphi-\psi}}2
 \le\norm{\Delta_{\psi,\varphi}^{s/2}\xi_\varphi}^{2}.
\]
Taking $\varphi=\omega$ and $\psi=t\sigma$,
the assertions follows from the scaling rule
$\Delta_{t\sigma,\omega}=t\Delta_{\sigma,\omega}$, and \eref{eq:hock-variational}. 
\end{proof}

\subsection{Tensorization and Regularization}\label{subsec:tensor-modular}
We will also use the following tensorization property. 
For two pairs of normal states $(\sigma_1,\omega_1), (\sigma_2,\omega_2)$  on two von Neumann algebras,
one has in the product standard form
\begin{equation}
 \Delta_{\sigma_1\otimes\sigma_2,\,\omega_1\otimes\omega_2}
 =\Delta_{\sigma_1,\omega_1}\otimes\Delta_{\sigma_2,\omega_2},
 \label{eq:modular-tensor}
\end{equation}
with corresponding supports. Consequently, for $0<s<1$, we have
\begin{equation}
 q_{\omega_1\otimes\omega_2,\,\sigma_1\otimes\sigma_2}(s)
 =q_{\omega_1,\sigma_1}(s)q_{\omega_2,\sigma_2}(s).
 \label{eq:q-tensor}
\end{equation}
where the left-hand side is defined by joint spectral calculus. Similarly, for relative entropy
\begin{equation}
 D(\omega_1\otimes\omega_2\|\sigma_1\otimes\sigma_2)
 =D(\omega_1\|\sigma_1)+D(\omega_2\|\sigma_2).
 \label{eq:D-tensor}
\end{equation}
given both terms on the right-hand side are finite. We will only consider finite-entropy product pairs in what follows.

For a normal state $\omega$ and a subset $\mathcal{F}$, denote 
\[ D(\omega|| \F)=\inf_{\sigma\in\F}D(\omega\|\sigma)\]
The existence of the relative entropy rate follows from Fekete's subadditive lemma
\cite{Fekete} as in finite dimensions (see
\cite[Sec.~II, following Eq.~(10)]{BP} and
\cite[Supplementary Information, Lemma~S4]{HY}).
\begin{lemma}[Existence of the entropy rate]\label{lem:rate}
Let $\rho\in S(\M)$ be a normal state and $\mathcal{F}_n\subset \mathcal{S}(\M^{\otimes n})$ be a family satisfying $\F_n\otimes \F_m\subset \F_{n+m}$. Then
\begin{align*}
D(\rho^{\otimes (n+m)}|| \F_{n+m})\le D(\rhon|| \F_n)+D(\rho^{\otimes m}|| \F_m). \end{align*} Consequently, if $D(\rho|| \F_1)\le C$, then 
\begin{equation}
 R:=\lim_{n\to\infty}\frac{1}{n}D(\rhon|| \F_n)
   =\inf_{n\ge1}\frac{1}{n}D(\rhon|| \F_n)
 \quad\text{exists and $R\le C$} 
 \label{eq:rate2}
\end{equation}
Moreover, for every $\delta>0$ there are $k\ge1$ and $\sigma_k\in\F_k$ such that
\begin{equation}
 R\le\frac1kD(\rho^{\otimes k}\|\sigma_k)<R+\delta
 \label{eq:near-block}
\end{equation}
\end{lemma}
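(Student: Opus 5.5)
Set $a_n:=D(\rhon\|\F_n)\in[0,+\infty]$. The plan is to show that $(a_n)$ is subadditive and finite, and then apply Fekete's lemma.

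\emph{Subadditivity.} Fix $n,m$ and arbitrary $\sigma\in\F_n$, $\omega\in\F_m$ with $D(\rhon\|\sigma)<\infty$ and $D(\rho^{\otimes m}\|\omega)<\infty$. By the tensor stability hypothesis $\F_n\otimes\F_m\subset\F_{n+m}$, the product $\sigma\otimes\omega$ lies in $\F_{n+m}$. The additivity \eref{eq:D-tensor} applies because both terms are finite, and it gives
\[
a_{n+m}\le D(\rho^{\otimes(n+m)}\|\sigma\otimes\omega)=D(\rhon\|\sigma)+D(\rho^{\otimes m}\|\omega).
\]
Here we use the ordered identification $\M_{n+m}=\M_n\bar\otimes\M_m$ and $\rho^{\otimes(n+m)}=\rhon\otimes\rho^{\otimes m}$. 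Taking the infimum over $\sigma$ and $\omega$ yields the inequality. If either infimum is $+\infty$, the inequality holds trivially.

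\emph{Finiteness.} Choose $\tau\in\F_1$ with $D(\rho\|\tau)\le C'$, where $C'$ is finite and can be taken arbitrarily close to $D(\rho\|\F_1)\le C$. Iterating tensor stability gives $\tau^{\otimes n}\in\F_n$. Iterating \eref{eq:D-tensor} by induction, with all terms finite at each step, gives $a_n\le nD(\rho\|\tau)$. Hence $a_n\le nC$ after letting $C'\downarrow D(\rho\|\F_1)$, and every $a_n$ is finite and nonnegative.

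\emph{Fekete's lemma.} For a nonnegative, finite, subadditive sequence, Fekete's lemma gives that $\lim a_n/n$ exists and equals $\inf_n a_n/n$. We recall the argument briefly. Fix $k$ and write $n=qk+r$ with $0\le r<k$. Subadditivity gives $a_n\le q\,a_k+a_r$, with the convention $a_0=0$. Since $a_r\le rC$ is bounded, this yields $\limsup_n a_n/n\le a_k/k$. Taking the infimum over $k$ shows that $\limsup$ is at most $\inf$, and $\inf$ is at most $\liminf$. The bounds $0\le R\le a_1\le C$ follow directly.

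\emph{Near-optimal block, \eref{eq:near-block}.} Given $\delta>0$, use $R=\inf_k a_k/k$ to choose $k$ with $a_k/k<R+\delta/2$. Then pick $\sigma_k\in\F_k$ with $D(\rho^{\otimes k}\|\sigma_k)<a_k+k\delta/2$. This gives the upper bound. The lower bound $R\le\frac1kD(\rho^{\otimes k}\|\sigma_k)$ holds because $R\le a_k/k\le\frac1kD(\rho^{\otimes k}\|\sigma_k)$.

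The only delicate step is the additivity \eref{eq:D-tensor} in the general von Neumann setting. We take it from the tensorization \eref{eq:modular-tensor}: the spectral measure of a tensor product of commuting positive operators is the product measure, so $\int\log(\lambda_1\lambda_2)\,\dd(\mu_1\otimes\mu_2)$ splits. The split is valid because $\int|\log\lambda_i|\dd\mu_i<\infty$ for finite-entropy pairs, as shown in the proof of Lemma~\ref{lem:moment}. This is exactly why we restrict to finite-entropy pairs throughout.
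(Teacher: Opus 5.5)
Your proposal is correct and follows the route the paper indicates: subadditivity comes from tensor stability plus additivity \eqref{eq:D-tensor} on finite-entropy pairs, finiteness from $\tau^{\otimes n}\in\F_n$, and then Fekete's lemma together with a choice of near-optimizer gives \eqref{eq:near-block}. The paper only cites Fekete's lemma and the finite-dimensional arguments of Brand\~ao--Plenio and Hayashi--Yamasaki, so your write-up supplies the details it leaves implicit, including the product-spectral-measure justification of additivity.
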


\section{Proof of the main theorem}\label{sec:proof}
We first prove convergence of the fixed-block testing, then establish the convex mixing estimate. Minimax argument yields uniform composite tests, from which the fixed-error limit and strong converse follow.

\subsection{The fixed-block testing profile converges in \texorpdfstring{$L^1$}{L1}}\label{sec:profile}

Fix $k\ge1$ and a state $\sigma_k\in\F_k$ with finite entropy rate $b=\frac{1}{k}D(\rho^{\otimes k}|| \sigma_k)<\infty$ for the moment; it need not yet be nearly optimal. For every $n$, write
\begin{equation}
 n=mk+j,\qquad 0\le j<k,
 \qquad
 \theta_n:=\sigma_k^{\otimes m}\otimes\tau^{\otimes j}.
 \label{eq:theta}
\end{equation}
Tensor stability ensures $\theta_n\in\F_n$. By tensor additivity, we have
\begin{equation}
 d_n:=\frac1nD(\rhon\|\theta_n)
 =\frac{mkb+jC}{n}\longrightarrow b.
 \label{eq:dn}
\end{equation}

Define
\begin{equation}
 g_n(x):=\hock_{e^{nx}}(\rhon\|\theta_n),\qquad x\ge0.
 \label{eq:gn}
\end{equation}
Each $g_n$ is nonincreasing, measurable, and takes values in $[0,1]$.
The key ingredient in our analysis is the following proposition.
\begin{proposition}\label{prop:profile}
For the sequence \eref{eq:theta}, with finite block rate $b=\frac{1}{k}D(\rho^{\otimes k}|| \sigma_k)<\infty$,
\begin{equation}
 \int_0^\infty
 \left|g_n(x)-\one_{[0,b)}(x)\right|\dd x\longrightarrow0.
 \label{eq:L1}
\end{equation}
In particular,
\begin{equation}
 e_n:=\int_b^\infty g_n(x)\dd x\longrightarrow0,
 \label{eq:tail}
\end{equation}
and for every $s>b$,
\begin{equation}
 0\le g_n(s)\le\frac{e_n}{s-b}\longrightarrow0.
 \label{eq:profile-upper}
\end{equation}
\end{proposition}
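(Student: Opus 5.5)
The strategy is to squeeze $g_n$ between two facts: the area under $g_n$ converges to $b$ by Corollary~\ref{cor:area}, and $g_n$ is close to $1$ on $[0,b-\delta]$ by the modular testing bound. Since $0\le g_n\le 1$, these two facts force the $L^1$ convergence \eref{eq:L1}.

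\textbf{Step 1 (lower bound below $b$).} Fix $r<b$. Apply Lemma~\ref{lem:modular-bound} with $t=e^{nr}$ and use the multiplicativity \eref{eq:q-tensor} of $q$ on the product pair $(\rhon,\theta_n)$. This gives
\[
 g_n(r)\ge 1-e^{nrs}\,q_{\rho^{\otimes k},\sigma_k}(s)^m\,q_{\rho,\tau}(s)^j .
\]
Set $\phi_k(s):=-\frac1s\log q_{\rho^{\otimes k},\sigma_k}(s)$ and $\phi_1(s):=-\frac1s\log q_{\rho,\tau}(s)$. Then the exponent equals $-ns\bigl(\tfrac{mk}{n}\tfrac{\phi_k(s)}{k}+\tfrac{j}{n}\phi_1(s)-r\bigr)$.

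Both pairs have finite relative entropy ($kb$ and $C$ respectively). Lemma~\ref{lem:moment} therefore gives $\phi_k(s)/k\to b$ and $\phi_1(s)\to C\ge 0$ as $s\to0^+$. Also $q\le1$ implies $\phi_1(s)\ge0$, so the $j$-term can simply be dropped.

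Choose $s$ small enough that $\phi_k(s)/k>(r+b)/2$. Since $mk/n\to1$, for large $n$ the bracket is at least a fixed positive constant. Hence $g_n(r)\to1$ exponentially fast. Monotonicity of $g_n$ upgrades this to $\inf_{x\le r}g_n(x)=g_n(r)\to 1$, so $\int_0^{r}(1-g_n)\,\dd x\to0$. Since $1-g_n\le1$, the leftover interval $[r,b]$ contributes at most $b-r$. Letting $r\uparrow b$ gives
\[
 \int_0^b |g_n-\one_{[0,b)}|\,\dd x=\int_0^b(1-g_n)\,\dd x\to0 .
\]

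\textbf{Step 2 (tail above $b$).} By Corollary~\ref{cor:area}, $\int_0^\infty g_n\,\dd x=d_n-\eta_n$ with $0\le\eta_n\le 1/n$. Combined with \eref{eq:dn}, this gives $\int_0^\infty g_n\to b$. Therefore
\[
 e_n=\int_b^\infty g_n=\int_0^\infty g_n-b+\int_0^b(1-g_n)\to b-b+0=0 .
\]
Together with Step~1 this yields \eref{eq:L1} and \eref{eq:tail}.

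\textbf{Step 3 (pointwise bound).} Fix $s>b$. Since $g_n$ is nonincreasing, $g_n\ge g_n(s)$ on $[b,s]$. Integrating over this interval gives $(s-b)g_n(s)\le e_n$, which is \eref{eq:profile-upper}.

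\textbf{Main obstacle.} The key step is Step~1. Only Lemma~\ref{lem:moment} is used, applied to the single block pair $(\rho^{\otimes k},\sigma_k)$ and to $(\rho,\tau)$; the $\tau^{\otimes j}$ padding is harmless because $q_{\rho,\tau}\le1$. The care needed is in fixing $s$ before sending $n\to\infty$, so that the limit in $s$ from Lemma~\ref{lem:moment} is applied at one fixed $s$ uniformly in $n$. No convergence of the upper tail is needed beyond the area identity.
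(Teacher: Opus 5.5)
Your proof is correct and follows essentially the same route as the paper: the modular testing bound (Lemma~\ref{lem:modular-bound}) together with Lemma~\ref{lem:moment} forces $g_n\to1$ below $b$, the area identity of Corollary~\ref{cor:area} then forces $e_n\to0$, and monotonicity gives the pointwise bound. The only differences are cosmetic. You discard the $\tau^{\otimes j}$ padding via $q_{\rho,\tau}(s)\le1$ instead of restricting to tests $T\otimes\one^{\otimes j}$, and you use monotonicity on $[0,r]$ with $r\uparrow b$ instead of bounded convergence on $[0,b]$.
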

\begin{proof}
We separate the lower part of the integral from the tail.
Assume $b>0$ and fix $0\le x<b$. Abbreviate
\[
 q_k(u):=q_{\rho^{\otimes k},\sigma_k}(u),\qquad
 b_u:=-\frac{1}{ku}\log q_k(u),\qquad0<u<1.
\]
Lemma~\ref{lem:moment} gives $b_u\to b$ as $u\to 0^+$. Choose and then fix $u$ such that $b_u>x$.

For $n\ge k$, restricting the maximization in \eref{eq:hock-variational} to those of the form $T\otimes\one^{\otimes j}$ gives
\begin{align}
 g_n(x)
 \ge\hock_{e^{nx}}\bigl((\rho^{\otimes k})^{\otimes m}
                     \big\|\sigma_k^{\otimes m}\bigr)\ge1-e^{nux}q_k(u)^m
 =1-\exp\!\left(nux-mku b_u\right).
 \label{eq:profile-lower}
\end{align}
The second inequality uses Lemma~\ref{lem:modular-bound} and \eref{eq:q-tensor}. Since $mk/n\to1$, the exponent divided by $n$  
$$(nux-mku b_u)/n \to u(x-b_u)<0 .$$
Then the exponential term therefore tends to zero, proving $g_n(x)\to1$ for $0\le x<b$.


By Corollary~\ref{cor:area} and \eref{eq:dn},
\begin{equation}
 \int_0^\infty g_n(x)\dd x=d_n-\eta_n\longrightarrow b,
 \qquad0\le\eta_n\le\frac1n.
 \label{eq:profile-total}
\end{equation}

For $b>0$, bounded convergence on the \emph{finite} interval $[0,b]$ gives
\[
 \int_0^b g_n(x)\dd x\longrightarrow b.
\]
Subtraction from \eref{eq:profile-total} gives \eref{eq:tail}, and
\[
 \int_0^\infty|g_n-\one_{[0,b)}|\dd x
 =b-\int_0^b g_n(x)\dd x+\int_b^\infty g_n(x)\dd x
 \longrightarrow0.
\]
If $b=0$, \eref{eq:profile-total} directly says $\int_0^\infty g_n\to0$, which proves both \eref{eq:L1} and \eref{eq:tail} in that case. For $s>b$, monotonicity gives
\[
 (s-b)g_n(s)\le\int_b^s g_n(x)\dd x\le e_n.
\]
This proves \eref{eq:profile-upper} and finishes the proof.
\end{proof}

\begin{remark}[Upper-tail convergence]
The lower testing bound \eref{eq:modular-bound} establishes convergence below $b$. The entropy integral supplies convergence of the total area, forcing the upper tail to vanish. These two ingredients yield the converse estimates used below.
\end{remark}

\subsection{The convex mixing estimate}\label{sec:mixing}

\begin{lemma}[Two-comparator mixing inequality]\label{lem:mixing}
Let $\zeta,\omega,\theta$ be normal states with $D(\zeta\|\theta)<\infty$. Put $\nu=(\omega+\theta)/2$. Then, for $1\le u\le v<\infty$,
\begin{align}
 D(\zeta\|\nu)
 \le{}1+\log(2u)
 +\hock_u(\zeta\|\omega)\log\frac vu+\int_v^\infty\hock_t(\zeta\|\theta)\,\frac{\dd t}{t}.
 \label{eq:mixing}
\end{align}
In particular, $D(\zeta\|\nu)$ is finite.
\end{lemma}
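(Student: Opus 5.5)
The plan is to apply the integral representation of Theorem~\ref{thm:integral} to the pair $(\zeta,\nu)$ and to bound the first integral piece by piece, using the order bound \eref{eq:hock-order} together with the two operator inequalities $\nu\ge\tfrac12\omega$ and $\nu\ge\tfrac12\theta$. The representation holds for arbitrary normal states, with $+\infty$ allowed on both sides. So if the right-hand side of \eref{eq:mixing} is finite, $D(\zeta\|\nu)$ is finite as well, and no separate support argument is needed.

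First, by Theorem~\ref{thm:integral},
\[
D(\zeta\|\nu)=\int_1^\infty\hock_t(\zeta\|\nu)\,\frac{\dd t}{t}+\int_1^\infty\hock_t(\nu\|\zeta)\,\frac{\dd t}{t^2}.
\]
Since $\hock\le1$ by Lemma~\ref{lem:hock}, the second integral is at most $\int_1^\infty t^{-2}\dd t=1$, which accounts for the constant $1$. I then split the first integral at $2u$ and $2v$; this is allowed because $1\le u\le v$, so $1\le 2u\le 2v$.
\begin{enumerate}[label=(\roman*)]
\item On $[1,2u]$ I use the trivial bound $\hock_t\le1$, which contributes at most $\log(2u)$.
\item On $[2u,2v]$, the inequality $\nu\ge\tfrac12\omega$ and \eref{eq:hock-order} with $c=\tfrac12$ give $\hock_t(\zeta\|\nu)\le\hock_{t/2}(\zeta\|\omega)$. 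Since $t/2\ge u$ and $\hock$ is nonincreasing in $t$, this is at most $\hock_u(\zeta\|\omega)$. Integrating $\dd t/t$ over an interval of logarithmic length $\log(v/u)$ gives the term $\hock_u(\zeta\|\omega)\log\frac vu$.
\item On $[2v,\infty)$, the inequality $\nu\ge\tfrac12\theta$ gives $\hock_t(\zeta\|\nu)\le\hock_{t/2}(\zeta\|\theta)$. The substitution $t'=t/2$ leaves the measure $\dd t/t$ invariant, so this piece becomes $\int_v^\infty\hock_{t'}(\zeta\|\theta)\,\dd t'/t'$.
\end{enumerate}
Adding the four contributions gives \eref{eq:mixing}.

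For finiteness, note that $v\ge1$. Applying Theorem~\ref{thm:integral} to $(\zeta,\theta)$ then gives $\int_v^\infty\hock_t(\zeta\|\theta)\,\dd t/t\le D(\zeta\|\theta)<\infty$. The remaining terms on the right of \eref{eq:mixing} are finite because $\hock_u\le1$ and $u,v$ are finite.

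The only point requiring care is checking that \eref{eq:hock-order} applies with the rescaled threshold $t/2$ in the correct direction. Here $\sigma=\nu\ge c\,\theta$ with $c=1/2$, so \eref{eq:hock-order} yields $\hock_t(\zeta\|\nu)\le\hock_{t/2}(\zeta\|\theta)$, and likewise with $\omega$ in place of $\theta$. I also need the integral representation to be valid without any support assumption on $\nu$. That is exactly how Theorem~\ref{thm:integral} is stated, so beyond this the argument is routine bookkeeping.
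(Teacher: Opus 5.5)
Your proposal is correct and follows the paper's proof essentially step for step. Both apply the integral representation and bound the second integral by $1$, then split the first integral at $2u$ and $2v$, using the trivial bound, then $\nu\ge\omega/2$ with monotonicity, then $\nu\ge\theta/2$ with the scale-invariant substitution. The finiteness argument via $v\ge1$ and $D(\zeta\|\theta)<\infty$ is also the same as the paper's.
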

\begin{proof}
The order relation $\nu\ge\omega/2$ and $\nu\ge\theta/2$ imply
\begin{equation}
 \hock_t(\zeta\|\nu)
 \le\hock_{t/2}(\zeta\|\omega),
 \qquad
 \hock_t(\zeta\|\nu)
 \le\hock_{t/2}(\zeta\|\theta).
 \label{eq:mixing-order}
\end{equation}
Apply the integral identity \eref{eq:entropy-integral}, the second integral $\displaystyle\int_1^\infty\hock_t(\nu\|\zeta)\,\frac{\dd t}{t^2}$ is at most $1$. Split the first integral of $\hock_t(\zeta\|\nu)$ as
\[
 \int_1^\infty=\int_1^{2u}+\int_{2u}^{2v}+\int_{2v}^\infty.
\]
On the first interval, use $\hock_t(\zeta\|\nu)\le1$, obtaining $\log(2u)$. On the second, \eref{eq:mixing-order} and $t\ge 2u$ give the bound $\hock_u(\zeta\|\omega)$, so its contribution is at most
$\hock_u(\zeta\|\omega)\log(v/u)$. On the third, use the second bound in \eref{eq:mixing-order} and change variables $t/2\mapsto t$. This gives the last integral in \eqref{eq:mixing}. That integral is finite by $v\ge1$ and $D(\zeta\|\theta)<\infty$. All bounds therefore yield a finite right-hand side, which also proves finiteness of the left-hand side.
\end{proof}

The state $\omega$ controls the middle interval of the testing threshold, while $\theta$ controls the tail. Both bounds apply to the same mixture $\nu$. If both comparison states belong to a convex alternative family, then so does $\nu$, and the estimate bounds the relative entropy at an admissible alternative.

\subsection{The composite threshold}\label{sec:composite}
In this part, we prove the the sharp threshold
\begin{equation}
 \lim_{n\to\infty}\Q_n(r)=
 \begin{cases}
  1,&r<R,\\
  0,&r>R.
 \end{cases}
 \label{eq:threshold2}
\end{equation}
for the composite hockey-stick quantities 
\begin{equation}
 \Q_n(r):=\inf_{\sigma\in\F_n}
           \hock_{e^{nr}}(\rhon\|\sigma).
 \label{eq:Qn2}
\end{equation}
We start with a minimax lemma.
\begin{lemma}[Uniform testing by minimax]\label{lem:minimax}
Let $\mathcal A$ be a von Neumann algebra, $\omega\in\Sn(\mathcal A)$, $\mathcal C\subseteq\Sn(\mathcal A)$ be a nonempty convex set, and $t>0$. Then
\begin{align}
 \inf_{\sigma\in\mathcal C}\hock_t(\omega\|\sigma)
 &=\max_{0\le T\le\one}
 \left\{\omega(T)-t\sup_{\sigma\in\mathcal C}\sigma(T)\right\}.
 \label{eq:minimax}
\end{align}
Neither compactness nor closedness of $\mathcal C$ is required.
\end{lemma}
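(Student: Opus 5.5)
We write $\hock_t(\omega\|\sigma)=\max_{0\le T\le\one}\{\omega(T)-t\sigma(T)\}$ using \eref{eq:hock-variational}, so the left side of \eref{eq:minimax} is $\inf_{\sigma\in\mathcal C}\sup_{T}L(T,\sigma)$ with
\[
 L(T,\sigma):=\omega(T)-t\sigma(T).
\]
The right side is $\sup_T\inf_\sigma L(T,\sigma)$, with the supremum to be shown attained. Thus the lemma is a minimax equality for $L$. The easy inequality $\sup_T\inf_\sigma L\le\inf_\sigma\sup_T L$ holds trivially.

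For the reverse inequality, we put the weak$^*$ topology $\sigma(\M,\M_*)$ on the test set $K:=\{T\in\mathcal A:0\le T\le\one\}$. By Banach--Alaoglu, $K$ is convex and weak$^*$ compact. For each fixed normal $\sigma$, the map $T\mapsto L(T,\sigma)$ is affine and weak$^*$ continuous, since $\omega,\sigma\in\mathcal A_*$. For each fixed $T$, the map $\sigma\mapsto L(T,\sigma)$ is affine on the convex set $\mathcal C$.

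We then apply a minimax theorem that needs compactness only on one side, namely Sion's theorem or Ky Fan's minimax theorem with $K$ compact and $\mathcal C$ merely convex. It gives
\[
 \inf_{\sigma\in\mathcal C}\max_{T\in K}L(T,\sigma)=\max_{T\in K}\inf_{\sigma\in\mathcal C}L(T,\sigma).
\]
The hypotheses are upper semicontinuity and quasi-concavity in $T$ on compact $K$, and quasi-convexity in $\sigma$. The latter holds because $L$ is affine in $\sigma$. No topology on $\mathcal C$ is needed; if one is required, we take the norm topology, in which $L$ is continuous. This is why no closedness or compactness of $\mathcal C$ enters.

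Finally, we identify the right side. We have
\[
 \inf_{\sigma\in\mathcal C}L(T,\sigma)=\omega(T)-t\sup_{\sigma\in\mathcal C}\sigma(T).
\]
This function of $T$ is an infimum of weak$^*$ continuous functions, hence weak$^*$ upper semicontinuous, so it attains its maximum on the compact set $K$. That justifies writing $\max$ in \eref{eq:minimax}.

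The main point to check carefully is that the version of the minimax theorem used allows a non-compact, non-closed convex set in the infimum variable. Sion's theorem in the form ``$X$ compact convex, $Y$ convex, $f(\cdot,y)$ usc quasi-concave, $f(x,\cdot)$ lsc quasi-convex'' gives exactly $\inf_y\sup_x=\sup_x\inf_y$. If we want to avoid topological subtleties on $\mathcal C$ entirely, we can instead invoke Kneser's or Fan's theorem for concave-convex-like functions with compactness only on $K$.
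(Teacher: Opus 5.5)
Your proposal is correct and follows essentially the same route as the paper. Both arguments write the left side via the variational formula \eqref{eq:hock-variational}, use that the test set $K$ is convex and weak$^*$-compact, and use that $L$ is affine in each variable, weak$^*$-continuous in $T$ and norm-continuous in $\sigma$. Both then apply Sion's minimax theorem with compactness only on the test side, and justify writing $\max$ by the upper semicontinuity of $T\mapsto\omega(T)-t\sup_{\sigma\in\mathcal C}\sigma(T)$.
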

\begin{proof}
View $\mathcal A_{\mathrm{sa}}$ as a real locally convex space with its ultraweak topology. The test set
\[
 K=\{T\in\mathcal A_{\mathrm{sa}}:0\le T\le\one\}
\]
is convex and weak$^*$-compact. Give $\mathcal C$ the subspace topology from the real Banach space $\mathcal A_{*,\mathrm{sa}}$ with its norm.

The real-valued function
\[
 f(T,\sigma)=\omega(T)-t\sigma(T)
\]
is affine in each variable. For fixed $\sigma$, it is weak$^*$-continuous in $T$ because $\omega,\sigma$ are normal. For fixed $T$, it is norm-continuous in $\sigma$. Sion's minimax theorem, in the form with only one compact side \cite[p.~174]{Sion}, therefore gives
\[
 \inf_{\sigma\in\mathcal C}\sup_{T\in K}f(T,\sigma)
 =\sup_{T\in K}\inf_{\sigma\in\mathcal C}f(T,\sigma).
\]
The left-hand side is the left-hand side of \eref{eq:minimax}. For the right-hand side, $T\mapsto\inf_{\sigma\in\mathcal C}f(T,\sigma)$ is upper semicontinuous, being an infimum of continuous functions, and it is bounded. Hence its supremum on $K$ is attained. Finally,
\[
 \inf_{\sigma\in\mathcal C}f(T,\sigma)
 =\omega(T)-t\sup_{\sigma\in\mathcal C}\sigma(T).
\]
This proves the claim.
\end{proof}

Applying the lemma to $\mathcal A=\M_n$, $\mathcal C=\F_n$, $t=e^{nr}$ gives
\begin{equation}
 1-\Q_n(r)
 =\min_{0\le T\le\one}
   \{\alpha_n(T)+e^{nr}B_n(T)\}.
 \label{eq:bayesian-minimax}
\end{equation}
Thus a hockey-stick value close to one yields a \emph{single} good test for the entire family.

We now prove the lower threshold for $\Q_n(r)$ using the convex-set argument. Recall that the threshold value is
\begin{equation}
 R:=\lim_{n\to\infty}\frac{1}{n}D(\rhon|| \F_n)
   =\inf_{n\ge1}\frac{1}{n}D(\rhon|| \F_n)
 \quad\text{exists and $R\le C$} 
 \label{eq:rate3}
\end{equation}
\begin{proposition}\label{prop:direct}
For every $r<R$, $\Q_n(r)\to1$.
\end{proposition}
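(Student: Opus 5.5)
The plan is to argue by contradiction. I will use the fact that the rate $R$ is an \emph{infimum}: by Lemma~\ref{lem:rate}, $\frac1n D(\rhon\|\F_n)\ge R$ for every $n$. A failure of $\Q_n(r)\to1$ would produce, through the mixing inequality of Lemma~\ref{lem:mixing}, an admissible alternative in $\F_n$ whose relative entropy rate lies strictly below $R$.

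\textbf{Reduction.} By Lemma~\ref{lem:hock}, $t\mapsto\hock_t$ is nonincreasing, so $r\mapsto\Q_n(r)$ is nonincreasing. It therefore suffices to treat $0\le r<R$, which requires $R>0$; if $R=0$ there is nothing to prove. Suppose the claim fails. Then there exist $c>0$, a subsequence $n\in\Lambda$, and states $\omega_n\in\F_n$ such that
\[
\hock_{e^{nr}}(\rhon\|\omega_n)\le 1-c\qquad(n\in\Lambda).
\]
Such $\omega_n$ exist because $\Q_n(r)$ is an infimum over $\F_n$.

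\textbf{Building the comparator.} Fix $\delta>0$, to be chosen small at the end. By \eref{eq:near-block}, there are $k$ and $\sigma_k\in\F_k$ with block rate $b=\frac1kD(\rho^{\otimes k}\|\sigma_k)\in[R,R+\delta)$. Let $\theta_n\in\F_n$ be the padded block states of \eref{eq:theta}, and set $b'=b+\delta$. Apply Lemma~\ref{lem:mixing} with
\[
\zeta=\rhon,\quad \omega=\omega_n,\quad \theta=\theta_n,\quad u=e^{nr},\quad v=e^{nb'} .
\]
This is legitimate: $1\le u\le v$, and $D(\rhon\|\theta_n)<\infty$ by \eref{eq:dn}. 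The mixture $\nu_n=(\omega_n+\theta_n)/2$ lies in $\F_n$ by convexity (A1). Substituting $t=e^{nx}$ in the tail integral turns it into $n\int_{b'}^\infty g_n(x)\dd x\le n e_n$. Lemma~\ref{lem:mixing} then gives
\[
\frac1n D(\rhon\|\nu_n)\le \frac{1+\log 2}{n}+r+(1-c)(b'-r)+e_n .
\]
Here $e_n\to0$ by \eref{eq:tail} of Proposition~\ref{prop:profile}.

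\textbf{Conclusion.} Letting $n\to\infty$ along $\Lambda$ and combining with $R\le\frac1nD(\rhon\|\nu_n)$ yields
\[
R\le r+(1-c)(b'-r)\le r+(1-c)(R+2\delta-r).
\]
Rearranging, this is $c(R-r)\le 2(1-c)\delta$. Since $R-r>0$ and $c>0$ are fixed before $\delta$, choosing $\delta$ small gives a contradiction, and hence $\Q_n(r)\to1$.

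The main obstacle is controlling the tail term $\int_v^\infty\hock_t(\rhon\|\theta_n)\frac{\dd t}{t}$ at the scale $o(n)$ uniformly enough. This is precisely what the $L^1$ convergence in Proposition~\ref{prop:profile} provides, and it is the reason the tail cutoff $b'$ is placed slightly above $b$ rather than at $b$ itself. A secondary point is that $\delta$ and $b'$ must be fixed before the limit in $n$ is taken, so that only $e_n$ and $(1+\log2)/n$ vanish in that limit.
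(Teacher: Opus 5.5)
Your argument is correct and is essentially the paper's proof in contrapositive form. The paper applies Lemma~\ref{lem:mixing} with the same choices $\zeta=\rhon$, $\theta=\theta_n$, $u=e^{nr}$ and an arbitrary $\omega_n\in\F_n$, but with $v=e^{nb}$ directly; your shift to $b'=b+\delta$ is harmless but unnecessary, since \eqref{eq:tail} already gives $\int_b^\infty g_n\to0$. It then takes the infimum over $\omega_n$ to get $\liminf_n\Q_n(r)\ge (R-r)/(b-r)$ and lets $b\downarrow R$.

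There is one small slip. If $R=0$, the case $r<0$ is not vacuous, and your monotonicity reduction cannot reach it because the mixing lemma needs $u=e^{nr}\ge1$. You should add the bound $\Q_n(r)\ge 1-e^{nr}\to1$ from the test $T=\one$, as the paper does.
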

\begin{proof}
For $r<0$, the test $T=\one$ in \eref{eq:hock-variational} gives
\[
 \Q_n(r)\ge1-e^{nr}\longrightarrow1.
\]
It remains to consider $0\le r<R$, which in particular implies $R>0$.

Fix a finite-entropy block comparator $\sigma_k\in\F_k$ and its rate $b=\frac{1}{k}D(\rho^{\otimes k}|| \sigma_k)<\infty$. Since $b\ge R>r$, we can apply Lemma~\ref{lem:mixing} with
\[
 \zeta=\rhon,\quad \theta=\theta_n,\quad
 u=e^{nr},\quad v=e^{nb}.
\]
For an \emph{arbitrary} $\omega_n\in\F_n$, convexity gives
\[
 \nu_n:=\tfrac12(\omega_n+\theta_n)\in\F_n.
\]
Consequently, using $D(\rhon|| \F_n)\le D(\rhon\|\nu_n)$, division of \eref{eq:mixing} by $n$ yields
\begin{align}
 \frac{1}{n}D(\rhon|| \F_n)
 &\le r+(b-r)\hock_{e^{nr}}(\rhon\|\omega_n)
       +\frac{1+\log2}{n}+e_n,
 \label{eq:central-estimate}
\end{align}
where the change of variables $t=e^{nx}$ gives exactly
\[
 \frac1n\int_{e^{nb}}^\infty
       \hock_t(\rhon\|\theta_n)\,\frac{\dd t}{t}
 =\int_b^\infty g_n(x)\dd x=e_n\longrightarrow0.
\]
The error terms in \eref{eq:central-estimate} depend on the chosen block comparator, but \emph{not} on $\omega_n$.

Rearrange \eref{eq:central-estimate} for each $\omega_n$ and then take its infimum:
\begin{equation}
 \Q_n(r)\ge
 \frac{\frac{1}{n}D(\rhon|| \F_n)-r-(1+\log2)/n-e_n}{b-r}.
 \label{eq:Q-lower-finite}
\end{equation}
For this fixed $k,\sigma_k$, let $n\to\infty$. Lemma~\ref{lem:rate} and Proposition~\ref{prop:profile} give
\begin{equation}
 \liminf_{n\to\infty}\Q_n(r)\ge\frac{R-r}{b-r}.
 \label{eq:Q-liminf}
\end{equation}
For every $\delta>0$, Lemma~\ref{lem:rate} provides a block rate $b<R+\delta$. Therefore
\[
 \liminf_n\Q_n(r)\ge\frac{R-r}{R+\delta-r}.
\]
Now let $\delta\to 0$. Since $\Q_n(r)\le1$, this proves the claim.
\end{proof}

\begin{corollary}[One test for all alternatives]\label{cor:uniform-tests}
For each $r<R$ there are tests $T_n$ with
\begin{equation}
 \alpha_n(T_n)+e^{nr}B_n(T_n)=1-\Q_n(r)\longrightarrow0.
 \label{eq:uniform-tests}
\end{equation}
In particular, $\alpha_n(T_n)\to 0$ and
$B_n(T_n)\le e^{-nr}(1-\Q_n(r))\le e^{-nr}$.
\end{corollary}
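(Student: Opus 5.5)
This follows directly from Lemma~\ref{lem:minimax} applied to $\mathcal A=\M_n$, $\omega=\rhon$, $\mathcal C=\F_n$ and $t=e^{nr}$. By \eref{eq:bayesian-minimax}, the maximum in \eref{eq:minimax} is attained. So for each $n$ there is a test $T_n$ with $0\le T_n\le\one$ and
\[
 \rhon(T_n)-e^{nr}\sup_{\sigma\in\F_n}\sigma(T_n)=\Q_n(r).
\]
Since $\alpha_n(T_n)=1-\rhon(T_n)$ and $B_n(T_n)=\sup_{\sigma\in\F_n}\sigma(T_n)$, this rearranges to the equality $\alpha_n(T_n)+e^{nr}B_n(T_n)=1-\Q_n(r)$ in \eref{eq:uniform-tests}. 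Convexity and nonemptiness of $\F_n$, needed for the lemma, come from \ref{ass:convex}.

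For the convergence, I will invoke Proposition~\ref{prop:direct}: for $r<R$ we have $\Q_n(r)\to1$, so $1-\Q_n(r)\to0$.

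Both summands $\alpha_n(T_n)$ and $e^{nr}B_n(T_n)$ are nonnegative, because $0\le T_n\le\one$ and the states are positive. Hence each summand is at most the sum, which gives $\alpha_n(T_n)\le 1-\Q_n(r)\to0$ and $B_n(T_n)\le e^{-nr}(1-\Q_n(r))$. The final bound $e^{-nr}(1-\Q_n(r))\le e^{-nr}$ follows from $\Q_n(r)\ge0$, which is Lemma~\ref{lem:hock} applied for each $\sigma$ and then passed to the infimum.

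There is no real obstacle here. The one point to check is attainment of the maximum, so that a single test $T_n$ exists rather than only an approximate one. Lemma~\ref{lem:minimax} already supplies this through weak$^*$-compactness of the test set and upper semicontinuity. Without attainment, one could instead take a near-optimal test with an additive error $1/n$, and the limit statements would be unchanged.
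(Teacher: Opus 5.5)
Your proposal is correct and follows the paper's proof exactly: you take an attaining test from the minimax identity \eref{eq:bayesian-minimax}, invoke Proposition~\ref{prop:direct} for $\Q_n(r)\to1$, and use nonnegativity of both summands. Your added remark that $\Q_n(r)\ge0$ (from Lemma~\ref{lem:hock}) justifies the final bound $e^{-nr}(1-\Q_n(r))\le e^{-nr}$, a step the paper leaves implicit.
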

\begin{proof}
Choose an attaining test in \eref{eq:bayesian-minimax} for each $n$, and apply Proposition~\ref{prop:direct}. Both terms on the left are nonnegative.
\end{proof}

The upper threshold follows easily from Proposition \ref{prop:profile}
\begin{proposition}\label{prop:upper}
For every $r>R$, $\Q_n(r)\to0$.
\end{proposition}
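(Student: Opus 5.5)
The plan is to bound $\Q_n(r)$ from above by a single admissible comparator built from a nearly optimal block. Since $\Q_n(r)$ is an infimum over $\F_n$, it suffices to exhibit $\theta_n\in\F_n$ with $\hock_{e^{nr}}(\rhon\|\theta_n)\to0$.

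Fix $r>R$ and set $\delta:=(r-R)/2>0$. By Lemma~\ref{lem:rate}, specifically \eref{eq:near-block}, there are $k\ge1$ and $\sigma_k\in\F_k$ with block rate
\[
b=\tfrac1kD(\rho^{\otimes k}\|\sigma_k)<R+\delta<r .
\]
In particular $b<\infty$, which is the finiteness hypothesis of Proposition~\ref{prop:profile}. With this block, form the comparators $\theta_n=\sigma_k^{\otimes m}\otimes\tau^{\otimes j}$ as in \eref{eq:theta}. Tensor stability \ref{ass:tensor}, together with $\tau\in\F_1$ from \ref{ass:reference}, guarantees $\theta_n\in\F_n$ for every $n$. (When $n<k$ we have $m=0$ and $\theta_n=\tau^{\otimes n}$, which is still admissible.)

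Applying Proposition~\ref{prop:profile} with $s=r>b$, the bound \eref{eq:profile-upper} gives
\[
0\le\Q_n(r)\le\hock_{e^{nr}}(\rhon\|\theta_n)=g_n(r)\le\frac{e_n}{r-b}\longrightarrow0 .
\]
This is exactly the claim.

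All the analytic work is already contained in Proposition~\ref{prop:profile}. The only point needing care is to choose the block with $b$ strictly below $r$, which is possible precisely because $R$ is an infimum attained only approximately. For this reason the choice of $\delta$ must come before the choice of $k$. I see no genuine obstacle beyond checking that the $b=0$ case of Proposition~\ref{prop:profile} is also covered, which it is.
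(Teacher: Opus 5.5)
Your proposal is correct and is essentially the paper's proof: you pick a block comparator with rate $b<r$ via Lemma~\ref{lem:rate}, use $\theta_n\in\F_n$, and bound $\Q_n(r)\le g_n(r)\le e_n/(r-b)\to0$ by \eref{eq:profile-upper}. Your added checks of the $n<k$ case ($\theta_n=\tau^{\otimes n}$) and the $b=0$ case are correct and only make explicit what the paper leaves implicit.
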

\begin{proof}
Choose a block comparator $\sigma_k$ with $b=\frac{1}{k}D(\rho^{\otimes k}|| \sigma_k)<r$ by Lemma~\ref{lem:rate}. As $\theta_n\in\F_n$, by \eref{eq:profile-upper}
\[
 0\le\Q_n(r)\le\hock_{e^{nr}}(\rhon\|\theta_n)
 =g_n(r)\longrightarrow0
\]
\end{proof}

\subsection{Stein exponent and strong converse}\label{sec:operational}
The Stein exponent and its strong converse now follows easily from the sharp threshold.\\ 

{\bf \noindent Achievability.} Fix $0<\varepsilon<1$. If $R>0$, take $0\le r<R$ and the tests from Corollary~\ref{cor:uniform-tests}. For all sufficiently large $n$, they satisfy $\alpha_n(T_n)\le\varepsilon$, so
\[
 \beta_{n,\varepsilon}\le B_n(T_n)\le e^{-nr}.
\]
Thus
\begin{equation}
 \liminf_n\frac1n\Dht^\varepsilon(\rhon\|\F_n)\ge r.
 \label{eq:fixed-direct}
\end{equation}
Let $r\to R$. We have the lower bound  \begin{equation}
 \liminf_n\frac1n\Dht^\varepsilon(\rhon\|\F_n)\ge R.
 \label{eq:fixed-direct2}
\end{equation}

{\bf \noindent Strong converse.} 
Fix $c>R$. By Lemma~\ref{lem:rate}, choose a finite-entropy
block comparator with rate $b<c$, and choose $s$ with $b<s<c$.
Let $\theta_n$ and $g_n$ be the corresponding comparator states
and testing profiles. For every test  $T$ on $\M_n$, the
variational formula gives
\begin{equation}
 \rhon(T)-e^{ns}\theta_n(T)\le g_n(s).
 \label{eq:pair-converse}
\end{equation}
Since $\theta_n\in\F_n$, every test satisfying
$B_n(T)\le e^{-nc}$ therefore obeys
\begin{equation}
 \rhon(T)
 \le g_n(s)+e^{ns}B_n(T)
 \le g_n(s)+e^{-n(c-s)}
 \longrightarrow0,
 \label{eq:strong-converse-proof}
\end{equation}
where $g_n(s)\to0$ follows from \eref{eq:profile-upper}.
The bound is uniform over all such tests.

Suppose now that a sequence $(T_n)$ satisfies
\eref{eq:SC-premise}. Choose $c$ strictly between $\displaystyle R<c<\liminf_{n\to\infty}-\frac1n\log B_n(T_n)$ in that condition. Then $B_n(T_n)\le e^{-nc}$ for all
sufficiently large $n$, so \eref{eq:strong-converse-proof}
gives $\rhon(T_n)\to0$, proving the strong converse.

To obtain the fixed-error rate upper bound, fix $0<\varepsilon<1$.
For every $c>R$, the right-hand side of
\eref{eq:strong-converse-proof} is smaller than $1-\varepsilon$
for all sufficiently large $n$. Hence every admissible test
with $\rhon(T)\ge1-\varepsilon$ must satisfy $B_n(T)>e^{-nc}$.
Taking the infimum over these tests gives
\[
 \beta_{n,\varepsilon}\ge e^{-nc},
 \qquad
 \frac1n\Dht^\varepsilon(\rhon\|\F_n)\le c
\]
for all sufficiently large $n$. Taking the limsup and then
letting $c\downarrow R$, we obtain
\begin{equation}
 \limsup_{n\to\infty}
 \frac1n\Dht^\varepsilon(\rhon\|\F_n)\le R.
 \label{eq:fixed-upper}
\end{equation}
Together with \eref{eq:fixed-direct}, this proves \eref{eq:main-limit} and completes the proof
of Theorem~\ref{thm:main}.

\section{Discussion}\label{sec:scope}

The comparison with \cite{HY} on assumptions is especially direct. Its finite-dimensional state theorem already uses only closed convex alternatives, tensor-product stability, and a full-rank one-copy reference. The contribution here is the extension of this testing conclusion to arbitrary von Neumann algebras, together with a proof based on integral representation of relative entropy. This includes infinite-dimensional $B(H)$ and algebras of types~II and~III; neither $\sigma$-finiteness nor faithfulness of $\rho$ or the reference $\tau$ is assumed. 

The finite reference condition $D(\rho||\tau)$ is only for the target state $\rho$. In finite dimensions a full-rank reference guarantees this condition for every target state, whereas in infinite dimensions faithfulness alone does not guarantee finite relative entropy. For example, take classical distributions $p_k=c/k^2$ and $t_k=e^{-k^2}/Z$ on $\mathbb{N}$. Both are faithful, but
\[
 D(p\|t)=\sum_k p_k\bigl(\log p_k+k^2+\log Z\bigr)=\infty,
\]
because $\sum_k p_k|\log p_k|<\infty$ whereas $\sum_k p_k k^2=\infty$. The theorem here does not address families for which no finite-entropy reference is available.

\end{document}